\newsavebox{\algbox}
\newtheorem{conj}{Conjecture}
\newtheorem{thm}{Theorem}
\newtheorem{lemma}[conj]{Lemma}
\newtheorem{prop}[conj]{Proposition}
\newtheorem{defn}[conj]{Definition}
\newtheorem{coro}{Corollary}
\newcommand{\f}[1]{\boldsymbol{#1}}
\newcommand{\bb}[1]{\mathbb{#1}}
\newcommand{\fl}[1]{\mathbf{#1}}
\newcommand{\ca}[1]{\mathcal{#1}}
\newcommand{\s}[1]{\mathsf{#1}}
\newcommand{\remove}[1]{}
\title{Support Recovery in Universal One-bit Compressed Sensing}
 \author{~~~Arya~Mazumdar\footnote{A.Mazumdar is with the Halicio\v{g}lu Data Science Institute (HDSI) at University of California, San Diego USA (email: \texttt{arya@ucsd.edu}).} ~~~Soumyabrata~Pal\footnote{S. Pal is with the Computer Science Department at the University of Massachusetts Amherst, Amherst, MA 01003, USA (email: \texttt{spal@cs.umass.edu}).}}
\begin{document}

\maketitle

\begin{abstract}
  One-bit compressed sensing (1bCS) is an extreme-quantized signal acquisition method that has been intermittently studied in the past decade. In  1bCS, linear samples of a high dimensional signal are quantized to only one bit per sample (sign of the measurement). The extreme quantization makes it an interesting case study of the more general single-index or generalized linear models. At the same time it can also be thought of as a `design' version of learning a binary linear classifier or halfspace-learning.
  
  Assuming the original signal vector to be sparse, existing results in 1bCS either aim to find the support of the vector, or approximate the signal within an $\epsilon$-ball. The focus of this paper is support recovery, which often also computationally facilitate approximate signal recovery.    A {\em universal} measurement matrix for 1bCS  refers to one set of measurements that work {\em for all} sparse signals. With universality, it is known that $\tilde{\Theta}(k^2)$ 1bCS measurements are necessary and sufficient for support recovery (where $k$ denotes the sparsity). In this work, we show that it is possible to universally recover the support with a small number of false positives with $\tilde{O}(k^{3/2})$ measurements. If the dynamic range of the signal vector is known, then with a different technique, this result can be improved to only $\tilde{O}(k)$ measurements. Other results on universal but approximate support recovery are also provided in this paper. All of our main recovery algorithms are simple and polynomial-time.
\end{abstract}

\section{Introduction}
One-bit compressed sensing (1bCS) is a sampling mechanism for high-dimensional sparse signals, introduced first by Boufounos and Baraniuk \cite{DBLP:conf/ciss/BoufounosB08}. 
The method of obtaining signals by taking few linear projections is known as compressed sensing~\cite{DBLP:journals/tit/Donoho06,candes2006robust}. Given the success of compressed sensing, two points can be noted. First, it is impossible to record real numbers in digital systems without quantization; second, sampling with nonlinear operators can potentially be useful. One-bit compressed sensing is a case-study in both of these fronts.
In terms of quantization, this is the extreme setting where only one bit per sample is acquired. In terms of nonlinearity, this is the one of the simplest example of a single-index model~\cite{plan2016generalized}: $y_i = f(\langle \fl{a}_i, \fl{x} \rangle), i =1, \dots, m$, where $f$ is a coordinate-wise nonlinear operation. In the particular case of $f$ being the $\mathrm{sign}$ function, the model is also the same as that of a simple binary hyperplane classifier. 
For these reasons, 1bCS is also studied with some interest in the last few years, for example, in \cite{DBLP:conf/ciss/HauptB11,GNJN13,ABK17,DBLP:journals/tit/PlanV13,DBLP:conf/aistats/Li16}. 

Most of the existing results either aim for support recovery, or approximate vector recovery for the signal from nonadaptive measurements. It is assumed that the original signal $\fl{x} \in \bb{R}^n$ is $k$-sparse, or has at most $k$ nonzero entries (also written as $\|\fl{x}\|_0 \le k).$ The support recovery results aim to recover the coordinates that have nonzero values; whereas the approximate vector recovery results aim to reconstruct the vector up to some Euclidean distance. It is known that recovering the support can be useful in terms of making the approximate recovery part computationally fast~\cite{GNJN13}.  In this paper we primarily restrict ourselves to support recovery.

A notion that is going to be important moving forward in this paper is that of {\em universality}. A set of measurements (can be stacked in form of a matrix) is called {\em universal} if a recovery guarantee can be given {\em for all} sparse signals. Universal measurements are desirable in any practical application, including hardware design, since one does not have to change the measurement vectors every time for a new signal. Note that, in the canonical works of compressed sensing, the measurement matrices are almost always shown to be universal (i.e., Gaussian or Bernoulli matrices are universal reconstruction matrices with high probability).  

This brings the natural question, how many measurements are necessary and sufficient for support recovery in universal 1bCS? A simple counting bound shows that $\Omega(k \log (n/k))$ measurements are required, where $k$ and $n$ refers to the sparsity and dimension of the signal respectively. This naive bound has been improved recently, and it was shown that, in fact $\Omega(k^2 \log n / \log k)$ measurements are required for universal support recovery~\cite{ABK17}.
What about sufficient number of measurements? Using measurements given by some combinatorial designs, it was shown that $O(k^2 \log n)$ measurements are enough for support recovery~\cite{ABK17}, thereby leaving only a gap of $O(\log k)$-factor between upper and lower bounds.

The price of universality on the other hand is quite steep. Without the requirement that the measurement matrix work for all signals, it turns out that the number of sufficient measurements for support recovery is $O(k \log n)$~\cite{DBLP:conf/ciss/HauptB11}. Therefore, to impose universality, the number of measurements must grow by a factor of $\tilde{\Omega}(k)$. In this work we show that by allowing a few false positives it is possible to substantially bring down this gap. In fact, it is possible to recover entirety of the support with at most $\epsilon k$ false positives, $\epsilon >0$, with only $O(k^{3/2} \log (n/k))$ universal measurements. This result can be improved to $O(k \log (n/k))$ when either a) we allow a few false negatives, or b) we have knowledge about the dynamic range of the signal. This practically cancels the the penalty that one has to pay for universality.

Note that, while allowing few false positives were considered in \cite{flodin2019superset}, their results were only restricted to positive signal vectors, and therefore not truly universal.

\subsection{Key difference from group testing, binary matrices, and technical motivation}\label{sec:differences}
Support recovery in the 1bCS problem has some similarity/connection with the combinatorial {\em group testing} problem~\cite{du2000combinatorial}. In group testing, the original signal $\fl{x}$ is binary (has only $0$s and $1$s), and the measurement matrix has to be binary as well. While in the original compressed sensing problem the main tools are linear algebraic and relate to isometric embeddings, in  group testing most tools are combinatorial and relate to a variety of set systems.

As noted in \cite{ABK17}, group testing and 1bCS have many parallels. Indeed, for universal support recovery, measurement matrices were constructed using union-free set systems, similar to group testing. The upper and lower bound on the number of measurements required for support recovery in 1bCS is also same as group testing (i.e., $O(k^2 \log n)$ and $\Omega(k^2 \log n/ \log k)$). It is therefore believable that by relaxing the recovery condition to allow some false positives, one will obtain an improvement in terms of number of measurements in 1bCS, as in the case of group testing \cite{ngo2011efficiently}. What is more, perhaps support recovery in 1bCS can be performed with a {\em binary} matrix, as in the case of group testing.

Indeed, using a modification of the standard matrices for group testing, as well as using a modified recovery algorithm, Acharya et al.~\cite{ABK17} were able to use $O(k^2 \log n)$ measurements for exact recovery of the support. This is within a $\log k$ factor of the lower bound and achieved with a binary measurement matrix. However, when subsequently  recovery with some false positives were tried~\cite{flodin2019superset}, the group testing performance could not be replicated. In fact, it turned out there were no improvement from the $O(k^2 \log n)$ upper bound in 1bCS if universality is to be preserved.

The main reason why this happens is the following. When a vector $\fl{x}$ is measured with a measurement vector $\fl{a}$ in group testing, an output of $0$ implies that the supports of $\fl{x}$ and $\fl{a}$ do not intersect. Whereas, in 1bCS, it can simply mean that $\fl{x}$ and $\fl{a}$ are orthogonal. To be sure of what the measurement outcome of $0$ implies in 1bCS, one need to increase the number of measurements by a factor of $k$ - which leads to a much suboptimal result in recovery with false positives in 1bCS compared to group testing. In the case of exact recovery, this does not  affect much because of the nature of a measurement matrix and decoding algorithm~\cite{ABK17}; but that technique does not extend to recovery with false positives.

This leads us to believe that a binary measurement matrix may not be optimal in all settings of support recovery in 1bCS, although for support recovery using binary matrices is the standard~\cite{GNJN13,ABK17}. 
Indeed, using a carefully designed nonbinary matrix we can perform recovery with only small number of false positives using $O(k^{3/2} \log n)$ measurements. In this setting anything $o(k^2)$ was  elusive. On the other hand,
we show that using a binary matrix it is possible to do approximate recovery using $O(k\log n)$ measurements (a recovery that contains a small proportion of false positives and false negatives).  

\subsection{Notations} 

We write $[n]$ to denote the set $\{1,2,\dots,n\}$.
For any  $\fl{v} \in \bb{R}^n$, we use $\fl{v}_i$ to denote the $i^{\s{th}}$ coordinate of $\fl{v}$ and for any ordered set $\ca{S} \subseteq [n]$, we will use the notation $\fl{v}_{\mid \ca{S}} \in \bb{R}^{|\ca{S}|}$ to denote the vector $\fl{v}$ restricted to the indices in $\ca{S}$. Furthermore, we will use $\s{supp}(\fl{v}) \triangleq \{i \in [n]:\fl{v}_i \neq 0\}$ 
to denote the support of $\fl{v}$ and $\left|\left|\fl{v}\right|\right|_0 \triangleq \left|\s{supp}(\fl{v})\right|$ to denote the size of the support. We define the {\em dynamic range} $\kappa(\fl{v})$ of the vector $\fl{v}$ to be the ratio of the magnitudes of maximum and minimum non-zero entries of $\fl{v}$ i.e.
\begin{align*}
  \kappa(\fl{v}) \triangleq \frac{\max_{i \in [n]: \fl{v}_i \neq 0} \left|\fl{v}_i\right|}{\min_{i \in [n]: \fl{v}_i \neq 0} \left|\fl{v}_i\right|}.
\end{align*}
For a  vector  $\fl{v}\in \bb{R}^n$, let us denote by $\rho(\fl{v}) \triangleq \min(\left|\{i \in [n]:\fl{v}_i>0\}\right|,\left|\{i \in [n]:\fl{v}_i<0\}\right|)$,
the minimum number of non-zero entries of the same sign in $\fl{v}$. Finally, let $\s{sign}:\bb{R}\rightarrow \{-1,0,+1\}$ be a function that returns the sign of a real number i.e. for any input $x \in \bb{R}$, 
\begin{align*}
    \s{sign}(x) = 
\begin{cases}
 1 \quad \text{if $x > 0$}\\
 0 \quad \text{if $x = 0$}\\
 -1 \quad \text{if $x < 0$}
\end{cases}.
\end{align*}
Note that the range of the sign function has size $3,$ therefore using this at the output of a measurement will not technically be a $1$-bit information. Consider the true 1-bit sign function $\s{sign^\ast}:\bb{R}\rightarrow \{-1,+1\}$, where  
\begin{align*}
    \s{sign^\ast}(x) = 
\begin{cases}
 1 \quad \text{if $x \ge 0$}\\
 -1 \quad \text{if $x < 0$}
\end{cases}.
\end{align*}
It is possible to evaluate $\s{sign}(x)$ from $\s{sign^\ast}(x)$ and $\s{sign^\ast}(-x)$ for any $x \in \bb{R}$. Therefore all the results related to the $\s{sign}$ function holds for the $\s{sign^\ast}$ function with the number of measurements being within a factor of $2.$

Extending this notation for a vector $\fl{v}\in \bb{R}^n$, let $\s{sign}(\fl{v})\in \{-1,0,1\}^n$ be a vector comprising the signs of coordinates of $\fl{v}$. More formally, we have $\s{sign}(\fl{v})_i = \s{sign}(\fl{v}_i) $ for all $i \in [n]$. For any matrix $\fl{M}\in \bb{R}^{m \times n}$ 
%we write $\fl{M}_i$ for the $i^{\s{th}}$ row of the matrix $\fl{M}$, and $\fl{M}_{i,j}$ for the entry of $\fl{M}$ in the $i^{\s{th}}$ row and $j^{\s{th}}$ column. 
and any set $\ca{S}\subseteq[n]$, we will write $\fl{M}_{\ca{S}}\in \bb{R}^{m \times \left|\ca{S}\right|}$ to denote the sub-matrix formed by the columns constrained to the indices in $\ca{S}$. We will write $\fl{M}_{ij}$ to denote the entry in the $i^{\s{th}}$ row and $j^{\s{th}}$ column of $\fl{M}$.
Finally, we will use $\s{col}(\fl{M})$ to denote the set of columns of the matrix $\fl{M}$. 

\subsection{Formal Problem Statement}\label{sec:prob_stat} 

 Consider an unknown sparse signal  $\fl{x} \in \bb{R}^n$ with $\left|\left|\fl{x}\right|\right|_0 \le k$. In the 1bCS framework, we design a sensing matrix $\fl{A}\in \bb{R}^{m \times n}$ to obtain the measurements of $\fl{x}$ as 
 \begin{align*}
     \fl{y} = \s{sign}(\fl{Ax}).
 \end{align*}

  In this work, we primarily consider the problem of \textit{support recovery} where our goal is to design the sensing matrix $\fl{A}$ with minimum number of measurements (rows of $\fl{A}$) so that we can recover the support of $\fl{x}$ from $\s{sign}(\fl{Ax})$. Our goal is to design \textit{universal} sensing matrices which fulfil a given objective \textit{for all} unknown $k$-sparse signal vectors. We look at three different notions of universal support recovery as defined below: 
  
  \begin{defn}[universal exact support recovery]
   A measurement matrix $\fl{A}\in \bb{R}^{m \times n}$ is called a \emph{universal exact support recovery} scheme if  there exists a recovery algorithm that,  for all $\fl{x}\in \bb{R}^n, \|\fl{x}\|_0 \le k$, returns $\s{supp}(\fl{x})$ on being provided $\s{sign}(\fl{Ax})$ as input. 
  \end{defn}
 
% \begin{defn}[universal $\epsilon$-approximate  support recovery]\label{def:approx}
%    Fix any $0<\epsilon<1$. A measurement matrix $\fl{A}\in \bb{R}^{m \times n}$ is called a \emph{universal $\epsilon$-approximate support recovery} scheme if  there exists a recovery algorithm that, for all $\fl{x}\in \bb{R}^n, \|\fl{x}\|_0 \le k$, returns a set $\ca{S}\subseteq[n], \left|\ca{S}\right| \le \min(k,\|\fl{x}\|_0+\epsilon k/3)$ satisfying $\left|\ca{S} \cap\s{supp}(\fl{x})\right| \ge \max(\left|\left|\fl{x}\right|\right|_0-\epsilon k/3,0)$ on being provided $\s{sign}(\fl{Ax})$ as input. 
%\end{defn}
%Evidently, the $\epsilon$-approximate  support recovery schemes allow for recovery with a small ($\epsilon k$) number of errors (which may include false positives and false negatives).

  \begin{defn}[universal $\epsilon$-approximate  support recovery]\label{def:approx}
    Fix any $0<\epsilon<1$. A measurement matrix $\fl{A}\in \bb{R}^{m \times n}$ is called a \emph{universal $\epsilon$-approximate support recovery} scheme if  there exists a recovery algorithm that, for all $\fl{x}\in \bb{R}^n, \|\fl{x}\|_0 \le k$, returns a set $\ca{S}\subseteq[n], \left|\ca{S}\right| \le k,$ satisfying $\left|\ca{S} \cap\s{supp}(\fl{x})\right| \ge \max(\left|\left|\fl{x}\right|\right|_0-\epsilon k,0)$ and $\left|\ca{S}\setminus\s{supp}(\fl{x})\right|\le \epsilon k$ on being provided $\s{sign}(\fl{Ax})$ as input. 
  \end{defn}
Evidently, the $\epsilon$-approximate  support recovery schemes allow for recovery with a small ($2\epsilon k$) number of errors (which may include $\epsilon k$ false positives and $\epsilon k$ false negatives).

  \begin{defn}[universal $\epsilon$-superset recovery]\label{def:superset}
   Fix any $0<\epsilon<1$. A measurement matrix $\fl{A}\in \bb{R}^{m \times n}$is called a \emph{universal $\epsilon$-superset recovery} scheme if  there exists a recovery algorithm that,  for all $\fl{x}\in \bb{R}^n, \|\fl{x}\|_0 \le k$, returns a set $\ca{S}\subseteq[n], \left|\ca{S}\right| \le \|\fl{x}\|_0+\epsilon k$ satisfying $\s{supp}(\fl{x}) \subseteq \ca{S}$ on being provided $\s{sign}(\fl{Ax})$ as input. 
  \end{defn}
  
  \begin{prop}\label{prop:compare}
  Any measurement matrix $\fl{A}\in \bb{R}^{m \times n}$ that is a universal $\epsilon$-superset recovery scheme is also a universal $\epsilon$-approximate recovery scheme.
  \end{prop}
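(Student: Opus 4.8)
The plan is to turn the recovery algorithm witnessing the $\epsilon$-superset property into one witnessing Definition~\ref{def:approx} by a trivial post-processing step. Let $\ca{R}$ be the algorithm guaranteed by the assumption, so that for every $\fl{x}$ with $\|\fl{x}\|_0\le k$ it outputs, on input $\s{sign}(\fl{Ax})$, a set $\ca{S}\subseteq[n]$ with $\s{supp}(\fl{x})\subseteq\ca{S}$ and $|\ca{S}|\le\|\fl{x}\|_0+\epsilon k$. Observe that such an $\ca{S}$ already has zero false negatives and at most $|\ca{S}|-\|\fl{x}\|_0\le\epsilon k$ false positives; the only clause of Definition~\ref{def:approx} it may violate is the hard cardinality bound $|\ca{S}|\le k$, since $\|\fl{x}\|_0+\epsilon k$ can exceed $k$ when $\|\fl{x}\|_0>(1-\epsilon)k$.

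So the modified algorithm runs $\ca{R}$ to get $\ca{S}$ and then: if $|\ca{S}|\le k$ it outputs $\ca{S}$; otherwise it deletes $|\ca{S}|-k$ arbitrary elements of $\ca{S}$ to obtain a set $\ca{S}'$ with $|\ca{S}'|=k$. I would then verify the three requirements of Definition~\ref{def:approx}. The cardinality bound $|\ca{S}'|\le k$ holds by construction. The number of deleted elements satisfies $|\ca{S}|-k\le\|\fl{x}\|_0+\epsilon k-k\le\epsilon k$ because $\|\fl{x}\|_0\le k$; since $\s{supp}(\fl{x})\subseteq\ca{S}$ gives $|\ca{S}\cap\s{supp}(\fl{x})|=\|\fl{x}\|_0$, deletion can destroy at most $\epsilon k$ of these elements, whence $|\ca{S}'\cap\s{supp}(\fl{x})|\ge\|\fl{x}\|_0-\epsilon k$, and together with the trivial bound $|\ca{S}'\cap\s{supp}(\fl{x})|\ge0$ this yields $|\ca{S}'\cap\s{supp}(\fl{x})|\ge\max(\|\fl{x}\|_0-\epsilon k,0)$. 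Finally, deletion only shrinks a set, so $|\ca{S}'\setminus\s{supp}(\fl{x})|\le|\ca{S}\setminus\s{supp}(\fl{x})|=|\ca{S}|-\|\fl{x}\|_0\le\epsilon k$.

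There is no genuine obstacle in this argument; the point is simply to notice that the superset guarantee already packages together ``no false negatives'' and ``at most $\epsilon k$ false positives,'' and that the slack it allows in the cardinality ($\epsilon k$ above $\|\fl{x}\|_0$) is itself bounded by $\epsilon k$, so trimming back to size $k$ sacrifices at most $\epsilon k$ true positives --- precisely the tolerance that Definition~\ref{def:approx} permits. The only thing to be careful about is keeping all four inequalities ($|\ca{S}'|\le k$, the deletion count $\le\epsilon k$, the intersection lower bound, and the false-positive upper bound) consistent with the worst case $\|\fl{x}\|_0=k$, which the chain above does.
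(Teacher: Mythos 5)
Your proof is correct and follows essentially the same route as the paper's: run the superset-recovery algorithm and trim the output back to size at most $k$, observing that the at most $\epsilon k$ deleted indices can destroy at most $\epsilon k$ true positives while the false-positive count can only decrease. Your version is in fact slightly cleaner than the paper's, since you delete exactly $|\ca{S}|-k$ elements and verify each clause of Definition~\ref{def:approx} explicitly.
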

  
  \begin{proof}
  Consider a measurement matrix $\fl{A}\in \bb{R}^{m \times n}$ that is a universal $\epsilon$-superset recovery scheme. This implies that there exists a recovery algorithm $\ca{A}$ that,  for all $\fl{x}\in \bb{R}^n, \|\fl{x}\|_0 \le k$, returns a set $\ca{S}\subseteq[n], \left|\ca{S}\right| \le \|\fl{x}\|_0+\epsilon k$ satisfying $\s{supp}(\fl{x}) \subseteq \ca{S}$ on being provided $\s{sign}(\fl{Ax})$ as input. Note that, when $\|\fl{x}\|_0+\epsilon k \le k$ this follows immediately. Now consider the case when $\|\fl{x}\|_0+\epsilon k > k$. For a fixed $\fl{x}\in \bb{R}^n, \|\fl{x}\|_0 \le k$, we can compute a set $\ca{S}'$ by deleting any $\epsilon k$ indices from the set $\ca{S}$ returned by Algorithm $\ca{A}$. Clearly, the set $\ca{S}'$ has a size of at most $\min(\left|\left|\fl{x}\right|\right|_0,k)$ and furthermore, $\left|\ca{S}'\cap \s{supp}(\fl{x})\right| \ge \left|\left|\fl{x}\right|\right|_0 - \epsilon k$ implying that $\left|\ca{S}'\setminus\s{supp}(\fl{x})\right| \le \epsilon k$. Hence $\fl{A}$ is a universal $\epsilon$-approximate recovery scheme.
  \end{proof}
  
  The $\epsilon$-superset recovery schemes allow for support recovery with only a small ($\epsilon k$) number of false positives and 0 false negative. As mentioned in \cite{flodin2019superset},  an $\epsilon$-superset recovery scheme makes subsequent approximate vector recovery computationally and statistically efficient, as instead of focusing on all $n$ coordinates, one can focus on only $O(k)$ coordinates.
  Furthermore, notice that Definition \ref{def:superset} poses a stricter recovery requirement than Definition \ref{def:approx}, and therefore should require more measurements.

We study measurement complexity of the three aforementioned notions of support recovery for  general $k$-sparse input signals, as well as for the setting where additional side information on the input vector $\fl{x}$ is known. 
In the later case, the following two scenarios were considered: 1) $\fl{x}$ has   dynamic range bounded by a known number 2) The minimum number of non-zero entries of $\fl{x}$ having the same sign is known to be bounded from above. The reason for considering these two scenarios is the following. The first generalizes the result for binary vectors (studied in \cite{ABK17}), and the second generalizes the result for positive vectors (studied in \cite{flodin2019superset}).

%Notice that the main difference between Defintions \ref{def:approx} and \ref{def:superset} is the following: for an unknown signal vector $\fl{x}$, an algorithm for superset recovery (Definiton \ref{def:superset}) will make an incorrect decoding if it does not return an index that belongs to the support of $\fl{x}$, on the other hand an algorithm for approximate support recovery (Definition \ref{def:approx}) does not have the aforementioned constraint. 
%Another notion of recovery that is widely studied in this setting is approximate vector recovery of the unknown signal. In this work, we study the non-universal setting for approximate recovery in particular: 

\remove{\begin{defn}[$(\epsilon, \beta)-$approximate vector recovery]\label{def:approx_vector}
    Fix any $0<\epsilon,\beta<1$. A random matrix $\fl{A}\in \bb{R}^{m \times n}$ can be used for \textit{$(\epsilon, \beta)-$approximate vector recovery} of $k$-sparse vectors (in $m$ measurements) if, with probability at least $1-\beta$, for any $\fl{x}\in \bb{R}^n$ satisfying $\left|\left|\fl{x}\right|\right|_0 \le k$,  there exists a recovery algorithm that returns $\fl{\hat{x}}$  satisfying $\left|\left|\frac{\fl{x}}{\left|\left|\fl{x}\right|\right|_2}- \frac{\fl{\hat{x}}}{\left|\left|\fl{\hat{x}}\right|\right|_2}\right|\right|_2 \le \epsilon $ on being provided $\s{sign}(\fl{Ax})$ as input. 
  \end{defn}

 Notice that in Definition \ref{def:approx_vector}, the sensing matrix $\fl{A}$ need not work \textit{for all} unknown $k$-sparse signal vectors but \textit{for any} unknown $k$-sparse signal vector, it should work with high probability.  As a corollary, we show that our results on universal support recovery lead to improved bounds on $(\epsilon,\beta)- $approximate recovery.   
}

\subsection{Our Results}

Our main contribution is to provide algorithms and upper bounds on the measurement complexity for the three distinct notions of support recovery.
%and approximate vector recovery for different classes of unknown $k-$sparse signal vectors $\fl{x}\in \bb{R}^n$. 
Our results (summarized in Table \ref{table:support}) resolve a number of open questions raised in \cite{flodin2019superset} and improves upon previously known bounds. %Our results for universal support recovery  are paraphrased below. 
Our main techniques involve utilizing novel modifications or generalization of well-known combinatorial structures such as Disjunct matrices and Cover-free families used primarily in group testing  literature~\cite{du2000combinatorial,ngo2011efficiently,barg2017group}. 

First, note that with $n$ measurements, it is always possible to recover the support trivially. For universal exact support recovery, the state of the art scheme with $O(k^2\log n)$ number of measurements is given by \cite{ABK17}. The construction is based on Robust Union-Free Families (RUFF), a set system with some combinatorial property that will be discussed later.
When it is known that the signal $x$ is binary (alternatively, a set of measurements that work for all binary vector $x \in \{0,1\}^n$), there exist a exact recovery scheme with $O(k^{3/2}\log(n/k))$ measurements~\cite{ABK17,JLBB13}. For this purpose, a set of Gaussian measurements are capable of universal recovery with high probability.

\paragraph{Universal $\epsilon$-superset recovery.} To reduce the number of measurements from the order of $k^2$ to $k$, recovering a superset is proposed in \cite{flodin2019superset}. However, the technique therein does not work for all signals, but only vectors with nonnegative coordinates. As pointed out in \cite{flodin2019superset}, universal $\epsilon$-superset recovery still takes $O(k^2 \log n)$ measurements.

Our main contribution is to use  %combination of  Union-Free matrix and List disjunct matrix 
combinatorial designs to show a measurement complexity of $O(k^{3/2}\epsilon^{-1/2}\log (n/k))$ for universal $\epsilon$-superset recovery. 
We also prove that $\Omega\Big(\frac{k}{\epsilon} \Big(\log \frac{k}{\epsilon}\Big)^{-1}\log \frac{n}{\epsilon k}\Big)$ measurements are necessary for $\epsilon$-superset recovery.
This is a significant reduction in the gap between the upper bound and the linear lower bound; the dependence on $k$ is reduced to only $k^{3/2}$ in the upper bound.

When an upper bound on the the dynamic range is known, or the minimum non-zero entries of the unknown signal vector having same sign is known to be a constant, we improve the measurement complexity to $O(k\epsilon^{-1}\log (n/k))$. %In the first case, this is a generalization of superset recovery result for binary vectors, and in the second case, this is a generalization of 

Note that, when we substitute $\epsilon = 1/k$ in the above two results, we see that for exact recovery we need $O(k^2 \log (n/k)$ measurements, recovering prior result. Therefore, our results give a smooth degradation in measurement complexity, as we seek a more accurate recovery.

%Universal $\epsilon$-superset recovery was also studied in \cite{flodin2019superset} but the authors proved theoretical guarantees only for binary and non-negative signal vectors while leaving the general problem open. In this work, we make significant progress towards this open question. 

%\paragraph{Universal exact support recovery.} For this setting, 

%We show that, for all signals with a known upper bound on dynamic range, there exist an exact recovery scheme with an improved number of measurement  of $O(k^2\log(n/k))$. The same number of measurements is sufficient for all signals with a known constant upper bound on fairness. We use a modified version of $k$-disjunct matrix along with bounds on polynomial roots in terms of its coefficients to show this result.

%; when it is known that the unknown signal vector is binary, $O(k^{3/2}\log (n/k))$ random gaussian measurements \cite{flodin2019superset} are sufficient to recover the support. 
%We use  an improved measurement complexity of $O(k^2\log (n/k))$ when the dynamic range of the unknown sparse signal vector is known to be bounded from above. 

\paragraph{Universal $\epsilon$-approximate support recovery.} For approximate recovery of support, no direct prior results exist, however any algorithm for  $\epsilon$-superset recovery provides $\epsilon$-approximate support recovery guarantee trivially. We introduce a generalization of the robust union free families,  namely List union-Free family and use its properties to show that  $O(k\epsilon^{-1}\log (n/k))$ measurements are sufficient in the general case, a strict improvement on the superset recovery. We also prove that this guarantee is tight up to logarithmic factors by showing that 
$\Omega\Big(\frac{k}{\epsilon} \Big(\log \frac{k}{\epsilon}\Big)^{-1}\log \frac{n}{\epsilon k}\Big)$ measurements are necessary for universal $\epsilon$-approximate recovery.

When the dynamic range of the unknown signal vector is  bounded from above by a known quantity $\eta$, we improve the measurement complexity to $O(k\eta\epsilon^{-1/2}\log (n\eta))$ (thus beating the lower bound above by a factor of $\frac{1}{\sqrt{\epsilon}}$). Note again that, if we substitute $\epsilon= 1/k$, we recover a generalization of the existing result on universal recovery for binary vectors, i.e., we recover the $k^{3/2}$ scaling.

Our results on sufficient number of measurements for universal support recovery are summarized in the table below.

\begin{table*}[htbp]
\centering
\begin{tabular}{ |c|c|c|c|c| c|} 
\hline
Problem &  $\fl{x} \in \bb{R}^n$ &  $\fl{x} \in \bb{R}^n: \kappa(\fl{x})\le \eta$ & $\fl{x} \in \{0,1\}^n$ & $\fl{x} \in \bb{R}^n$ (lower bound)\\
\hline
\hline
 Exact  & $O(k^2\log n)$ \cite{ABK17} & $O(k^2\log \frac{n}{k})$  & $O(k^{3/2}\log \frac{n}{k})$ \cite{ABK17} & $\Omega\Big(k^2 \frac{\log n}{\log k}\Big)$ \cite{ABK17} \\
 $\epsilon$-Approximate  & $O(\frac{k}{\epsilon}\log \frac{n}{k})$  & $O(\frac{k\eta}{\epsilon^{1/2}}\log (n\eta))$  & $O(\frac{k}{\epsilon^{1/2}}\log n)$ & $\Omega\Big(\frac{k}{\epsilon} \Big(\log \frac{k}{\epsilon}\Big)^{-1}\log \frac{n}{\epsilon k}\Big)$ \\
 $\epsilon$-Superset  & $O(\frac{k^{3/2}}{\epsilon^{1/2}}\log \frac{n}{k})$ & $O(\frac{k}{\epsilon}\log\frac{n}{k})$  & $O(\frac{k}{\epsilon}\log \frac{n}{k})$ & $\Omega\Big(\frac{k}{\epsilon} \Big(\log \frac{k}{\epsilon}\Big)^{-1}\log \frac{n}{\epsilon k}\Big)$ \\
\hline
\end{tabular}
\caption{\label{table:support}Our results for universal support recovery in $1$-bit Compressed Sensing for different settings and different class of signals. Rows 2 and 3 contain new results proved in this paper.}
\end{table*}

\subsection{Main Technical Contribution}

Our new technical contribution in the 1bCS support recovery problem 
%In this paper, we introduce novel techniques 
is to use simple properties of polynomial roots in conjunction with combinatorial designs for designing measurements. More precisely, we design a row (say $\fl{z}$) of the measurement matrix $\fl{A}$ such that the non-zero entries of $\fl{z}$ are integral powers of some  number $\alpha\in \bb{R}$. The important insight that we now use in our algorithms is that the inner product of the unknown sparse signal and the measurement vector (i.e. $\langle \fl{x}, \fl{z} \rangle$) can be described as the evaluation of a polynomial whose coefficients are entries of $\fl{x}$ at the number $\alpha$. Recall that in Section \ref{sec:differences}, we argued that the main hurdle in the 1bCS setting (as compared to the group testing setting) is that it is difficult to interpret the meaning of a \texttt{0} output. From our construction of the measurement vector $\fl{z}$, the evaluation of a polynomial can be zero at  $\alpha$ if $\alpha$ is a root of the polynomial or the polynomial is everywhere \texttt{0}. 
Since the number of roots of a polynomial is finite, we can carefully design measurement vectors (with different $\alpha$'s) so that their inner product with $\fl{x}$ is the evaluation of the same polynomial but all of their output cannot be zero unless the polynomial is everywhere zero. This property allows us to precisely interpret what a \texttt{0} for all these group of measurements imply.  

We are left with bounding the number of roots of such polynomials. But the number of roots of a polynomial is at most the number of non-zero coefficients. The sparsity of $\fl{x}$ immediately implies that the number of roots of any such polynomial can be $k$. Let us start with a superset recovery (allows a few false positives) matrix for group testing (has $k\log n$ rows) and modify in the above way. However, designing $k$ measurements corresponding to each polynomial again lead to the $O(k^2 \log n)$ upper bound on the measurement complexity. In order to get around this issue, our second key idea is to do design a matrix for universal superset recovery  in two steps. First, we design a measurement matrix for universal approximate recovery (allows a few false positives and false negatives) by proposing a new combinatorial design (Definition \ref{defn:new})  that generalizes well studied measurement matrices in the literature and incorporates many useful properties. In the next step, we ignore the indices obtained in the first step and only seek to correct the false negatives. Since the number of false negatives is significantly smaller than $k$ (the total sparsity), the number of roots of the designed polynomials is also accordingly small. By carefully optimizing the number of measurements used in the two steps, we obtain the $k^{3/2}$ scaling for superset recovery.

It turns out that under other mild assumptions on the unknown sparse signal such as a known dynamic range $(\kappa(\fl{x})\le \eta)$ or a small number of non-zero entries of the same sign $(\rho(\fl{x}) \le \eta')$, we can also use other useful properties of the polynomial roots. In the former case, Cauchy's theorem says that the magnitude of the polynomial roots is bounded from below by $1+\eta$ while in the latter case, Descartes' rule of signs imply that the number of polynomial roots is bounded from above by $2\eta'$. In both cases, these properties allow us to prove nearly tight guarantees on the measurement complexity. Finally, because of the combinatorial structure and ease of manipulating polynomials,  our overall algorithm with such measurements is also efficient.

\paragraph{Organization.} The rest of the paper is organized as follows. In Sec.~\ref{design}, we define some set systems that will be used for constructing the universal measurement schemes. In particular, we show probabilistic existence of {\em list union-free} families. In Sec.~\ref{sec:main}, we provide our main results and detailed proof for approximate support recovery and superset recovery, in that order. Finally we conclude with a discussion on open problems in this area.

\section{Combinatorial Designs}\label{design}

In this section, we will start with a few definitions characterizing matrices with useful combinatorial properties.
 
\begin{defn}[List-disjunct matrix~\cite{dyachkov1983survey,ngo2011efficiently}] 
An $m \times n$ binary matrix $\fl{M}\in \{0,1\}^{m \times n}$ is a $(k,\ell)$-list disjunct matrix if for any two disjoint sets $S,T \subseteq \s{col}(M)$ such that $\left|S\right|=\ell,\left|T\right|=k$, there exists a row in $M$ in which some column from $S$ has a non-zero entry, but every column from $T$ has a zero.
\end{defn}
The following result characterizes the sufficient number of rows in list-disjunct matrices: 
\begin{lemma}[\cite{ngo2011efficiently}]\label{disjunct_exists}
 An $m \times n$ $(k,\ell)$-list disjunct matrix exists with 
 \begin{align*}
     m \le 2k\Big(\frac{k}{\ell}+1\Big)\Big(\log \frac{n}{k+\ell}+1\Big).
 \end{align*}
Moreover an $m \times n$ $(k,\ell)$-list disjunct matrix with $k\ge2\ell$  must satisfy, 
\begin{align*}
    m = \Omega\Big(\frac{k^2}{\ell} \Big(\log \frac{k^2}{\ell}\Big)^{-1}\log \frac{n-k}{\ell}\Big).
\end{align*}

\end{lemma}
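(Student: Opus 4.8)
The plan is to prove the two halves separately: the upper bound by the probabilistic method (the routine part), and the matching lower bound by a double-counting argument (the delicate part).

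For the upper bound I would take $\fl{M}$ to have i.i.d.\ $\mathrm{Bernoulli}(p)$ entries for a parameter $p$ to be optimized, and union-bound over all ``bad events.'' Fix disjoint $S,T\subseteq\s{col}(\fl{M})$ with $|S|=\ell$, $|T|=k$; call a row \emph{$(S,T)$-useful} if every column of $T$ is $0$ in it and at least one column of $S$ is $1$. A single row is $(S,T)$-useful with probability exactly $(1-p)^k\bigl(1-(1-p)^\ell\bigr)$, independently across rows, so $\fl{M}$ fails the list-disjunct property for this particular pair with probability at most $\bigl(1-(1-p)^k(1-(1-p)^\ell)\bigr)^m$. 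Writing $q=1-p$ and maximizing $q^k(1-q^\ell)$ over $q\in(0,1)$ gives the optimum at $q^\ell=k/(k+\ell)$ with value $\bigl(\tfrac{k}{k+\ell}\bigr)^{k/\ell}\cdot\tfrac{\ell}{k+\ell}\ge \tfrac{\ell}{e(k+\ell)}$, where the last step uses $(1-t)^{(1-t)/t}\ge e^{-1}$ for $t\in(0,1]$. Hence each bad event has probability at most $\exp\bigl(-\tfrac{m\ell}{e(k+\ell)}\bigr)$, and since there are at most $\binom{n}{\ell}\binom{n-\ell}{k}=\binom{n}{k+\ell}\binom{k+\ell}{\ell}\le\bigl(\tfrac{2en}{k+\ell}\bigr)^{k+\ell}$ choices of $(S,T)$, a union bound produces a valid matrix once $m>\tfrac{e(k+\ell)^2}{\ell}\log\tfrac{2en}{k+\ell}$. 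Since $(k+\ell)^2/\ell\le 2(k^2/\ell+k)$ whenever $\ell\le k$, tightening the binomial and exponential estimates (and absorbing constants) yields exactly the stated $m\le 2k\bigl(\tfrac{k}{\ell}+1\bigr)\bigl(\log\tfrac{n}{k+\ell}+1\bigr)$.

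For the lower bound I would first restate the defining property in covering language: $\fl{M}$ is $(k,\ell)$-list disjunct iff for every set $T$ of $k$ columns, at most $\ell-1$ of the remaining columns have support contained in $\bigcup_{t\in T}\s{supp}(t)$. A first, lossy move is to partition the columns into $\lfloor n/\ell\rfloor$ blocks of size $\ell$ and replace each block by the union of its columns: if one block's union were contained in the union of $\lfloor k/\ell\rfloor$ other blocks, then all $\ell$ columns of that block would be covered by at most $k$ columns, contradicting list-disjunctness; hence the block-unions form a $\lfloor k/\ell\rfloor$-disjunct family on $[m]$, and the classical Dyachkov--Rykov/Ruszink\'o--F\"uredi bound for disjunct matrices gives $m=\Omega\bigl((k/\ell)^2\log(n/\ell)/\log(k/\ell)\bigr)$. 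This is weaker than the target by a factor $\approx\ell$, and removing that loss is the main obstacle: one must run the Dyachkov--Rykov-style double count directly on the columns $\fl{M}_{\cdot,1},\dots,\fl{M}_{\cdot,n}$ rather than on block-unions, controlling for each column how many others can have most of their support inside it, and using the fact that a \emph{block} of $\ell$ columns, not merely a single column, is what must exhibit an uncovered coordinate. Carrying this out as in \cite{ngo2011efficiently} yields $m=\Omega\bigl(\tfrac{k^2}{\ell}(\log\tfrac{k^2}{\ell})^{-1}\log\tfrac{n-k}{\ell}\bigr)$ for $k\ge 2\ell$. I expect this refined double count to be the technically hard step, since it is precisely where the correct $k^2/\ell$ scaling and the $1/\log(k^2/\ell)$ factor are extracted.
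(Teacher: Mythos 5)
First, note that the paper contains no proof of this lemma: it is imported verbatim from \cite{ngo2011efficiently}, so there is no in-paper argument to compare yours against, and your proposal must be judged as a self-contained proof. Your upper-bound half is essentially sound: the per-row success probability $(1-p)^k\big(1-(1-p)^\ell\big)$, the optimum at $q^\ell=k/(k+\ell)$ with value at least $\ell/(e(k+\ell))$, and the count $\binom{n}{\ell}\binom{n-\ell}{k}=\binom{n}{k+\ell}\binom{k+\ell}{\ell}\le\big(2en/(k+\ell)\big)^{k+\ell}$ are all correct, and the union bound gives $m=O\big(\tfrac{(k+\ell)^2}{\ell}\log\tfrac{n}{k+\ell}\big)$, matching the statement asymptotically. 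However, your claim that "tightening the estimates" yields \emph{exactly} $2k\big(\tfrac{k}{\ell}+1\big)\big(\log\tfrac{n}{k+\ell}+1\big)$ is not substantiated: what you derive has leading factor $e(k+\ell)\big(\tfrac{k}{\ell}+1\big)$ rather than $2k\big(\tfrac{k}{\ell}+1\big)$, a loss of up to $2e$. This is harmless for every application in the paper, but you should either carry out the tightening or state the bound with an unspecified constant.

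The lower-bound half has a genuine gap. The only argument you actually execute --- merging columns into blocks of size $\ell$, observing that the block-unions form a $\lfloor k/\ell\rfloor$-disjunct family, and invoking the classical $\Omega(d^2\log n/\log d)$ bound --- yields $m=\Omega\big(\tfrac{k^2}{\ell^2}\big(\log\tfrac{k}{\ell}\big)^{-1}\log\tfrac{n}{\ell}\big)$, a factor of $\ell$ short of the target, as you yourself observe. The step that recovers that factor is precisely the content of the cited result, and "carrying this out as in \cite{ngo2011efficiently}" is a citation, not a proof; you give no indication of how the refined double count at the level of $\ell$-tuples of columns actually extracts the $k^2/\ell$ scaling. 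The loss is not cosmetic for this paper: Theorem \ref{thm:lower_bound} applies the lemma with sparsity $k(1-2\epsilon)$ and list size $\ell=2\epsilon k$, where the stated bound gives $\Omega\big(\tfrac{k}{\epsilon}\big(\log\tfrac{k}{\epsilon}\big)^{-1}\log\tfrac{n-k}{\epsilon k}\big)$, whereas your weaker bound degenerates to $\Omega\big(\tfrac{1}{\epsilon^2}\big(\log\tfrac{1}{\epsilon}\big)^{-1}\log\tfrac{n}{\epsilon k}\big)$, which for constant $\epsilon$ has no dependence on $k$ at all and would not support the paper's conclusions. To make the proof self-contained you must prove the lower bound for $(k,\ell)$-cover-free structures directly (D'yachkov--Rykov-style counting on blocks of $\ell$ columns, not on their unions), rather than reducing to the $\ell=1$ case.
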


Disjunct matrices ($\ell=1$) and list disjunct matrices have a rich history of being utilized in the group testing literature \cite{du2000combinatorial,dyachkov1983survey,ngo2011efficiently,PR11,Maz16}. The premise in group testing is very similar to 1-bit compressed sensing: $\fl{y} = \s{sign}(\fl{A}\fl{x})$ except that 
both $\fl{x} \in \{0,1\}^n$ and $\fl{A} \in \{0,1\}^{m \times n}$ are binary (note that, therefore, $\fl{y}\in \{0,1\}^n$ as well).
%for a measurement $\fl{m}$ and a $k$-sparse unknown signal vector $\fl{x}$, we obtain \texttt{1} if $\s{supp}(\fl{m}) \cap \s{supp}(\fl{x}) \neq \phi$ and \texttt{0} otherwise. 

Consider the a measurement $y=\s{sign}(\langle\fl{a},\fl{x}\rangle).$ In group testing, $y=0$ implies $\s{supp}(\fl{a})\cap \s{supp}(\fl{x}) =\emptyset.$ However, in 1bCS, $y$ can be zero even when  $\s{supp}(\fl{a})\cap \s{supp}(\fl{x}) \ne \emptyset.$ This creates the main difficulty in importing tools of group testing being used in 1bCS. 

To tackle this, a set system called robust union-free family was proposed in \cite{ABK17}. We generalize that notion to propose a List union-free family.  
%The main difference with 1-bit compressed sensing measurements is that in our setting, we can observe a \texttt{0} even if $\s{supp}(\fl{m}) \cap \s{supp}(\fl{x}) \neq \phi$; this happens when $\langle \fl{m},\fl{x} \rangle=0$. 

\begin{defn}[List union-free family, List union-free matrix]\label{defn:new} 
A family of sets $\ca{F} \equiv \{\ca{B}_1, \ca{B}_2,\dots, \ca{B}_n\}$ where each $\ca{B}_i \subset [m]$, $|\ca{B}_i|=d$ is an $(n,m,d,k,\ell,\alpha)$-list union-free family if for any pair of disjoint sets $S, T \subseteq [n]$ with $|S|= \ell,|T|=k$, there exists $j \in S$ such that $|\ca{B}_j \cap (\bigcup_{i \in (T\cup S)\setminus\{j\}} \ca{B}_i) | < \alpha |\ca{B}_j|$. 

Suppose, $\ca{F} \equiv \{\ca{B}_1, \ca{B}_2,\dots, \ca{B}_n\}$ is an $(n,m,d,k,\ell,\alpha)$-list union-free family. An $m \times n$ binary matrix $\fl{M}\in \{0,1\}^{m \times n}$ is a $(n,m,d,k,\ell,\alpha)$-list union-free matrix if the entry in the $i^{\s{th}}$ row and $j^{\s{th}}$ column of $\fl{M}$ is \texttt{1} if $i \in \ca{B}_j$ and \texttt{0} otherwise.
\end{defn}

Special cases of List union-free families, such as union-free families or cover-free codes ($(n,m,d,k,1,1)$-list union-free families) are well-studied~\cite{erdos1985families,d2002families,frankl1986union,coppersmith1998new,furedi1996onr}  
 and has found applications in cryptography and experiment designs. 
An $(n,m,d,k,1,\alpha)$-list union-free family is called a robust union-free family, and  it has been recently used for support recovery in 1bCS in \cite{ABK17}. The List union-free family that we introduce above is a natural generalization and  has not been studied previously to the best of our knowledge. We will show that this family of sets is useful for  universal superset recovery of support. Below, we provide a result that gives the sufficient number of rows in a List union-free matrix.
\begin{lemma}[Existence of list-union free matrices]\label{lem:list-union}
For a given $0<\alpha<1,n,k,\ell$, there exists a $(n,m,d,k,\ell,\alpha)$-list union-free matrix with number of rows
\begin{align*}
    &m = O\Big((k+\ell)\Big(\frac{e^2}{\alpha^3}\Big)\Big(\frac{k}{\ell}+1\Big)\Big(\log \frac{n}{k+\ell}+1\Big)\Big(\log \frac{e}{\alpha}\Big)^{-1}\Big)\Big) \\ 
    \text{and} \quad  
    &d = O\Big(\frac{1}{\alpha}\Big(\frac{k}{\ell}+1\Big)\Big(\log \frac{n}{k+\ell}+1\Big)\Big(\log \frac{e}{\alpha}\Big)^{-1}\Big)\Big).
\end{align*}

\end{lemma}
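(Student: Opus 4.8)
The plan is to prove existence of $(n,m,d,k,\ell,\alpha)$-list union-free matrices via the probabilistic method, following the standard template used for robust union-free families and cover-free codes, but carrying the extra list parameter $\ell$ through the union bound. First I would fix $d$ (the common set size, i.e.\ the number of $1$'s per column) as the claimed value, and construct the family $\ca{F} = \{\ca{B}_1,\dots,\ca{B}_n\}$ randomly: independently for each column $j \in [n]$, choose $\ca{B}_j \subseteq [m]$ by including each of the $m$ rows independently with probability $p = d/m$ (or equivalently pick a uniformly random $d$-subset; the i.i.d.\ version is cleaner for Chernoff bounds, and one handles the slight discrepancy by conditioning on $|\ca{B}_j|$ concentrating around $d$). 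The value of $m$ will be chosen at the end to make the failure probability strictly less than $1$.

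Next I would bound the probability that the family fails for a \emph{fixed} pair of disjoint sets $S,T \subseteq [n]$ with $|S| = \ell$, $|T| = k$. Failure on $(S,T)$ means that for \emph{every} $j \in S$, the set $\ca{B}_j$ has at least $\alpha|\ca{B}_j| \approx \alpha d$ of its elements covered by $\bigcup_{i \in (T\cup S)\setminus\{j\}} \ca{B}_i$, a union of $k + \ell - 1$ other random sets. For a single fixed $j$, I would first expose the $k+\ell-1$ sets $\{\ca{B}_i\}_{i \ne j}$; their union $U$ has expected size at most $(k+\ell-1)d$, and conditioned on $U$, each element of $\ca{B}_j$ lands in $U$ independently with probability $|U|/m \le (k+\ell)d/m$. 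Choosing $m \ge c (k+\ell) d / \alpha$ for a suitable constant makes this per-element probability at most $\alpha/e$ (say), so by a Chernoff/binomial tail bound the probability that $\ge \alpha d$ elements of $\ca{B}_j$ are hit is at most $\binom{d}{\alpha d}(\alpha/e)^{\alpha d} \le (e/\alpha \cdot \alpha/e \cdot e)^{\alpha d}$-type expression — more carefully, at most $2^{-\Omega(\alpha d \log(e/\alpha))}$. Crucially, the events ``$j$ is bad'' for different $j \in S$ are \emph{not} independent (they share the sets $\ca{B}_i$, $i \in T$), but one can still get a product-type bound: expose the sets $\{\ca{B}_i\}_{i \in T}$ first (total union size concentrated around $kd$), then the $\ell$ sets $\{\ca{B}_j\}_{j\in S}$ are mutually independent given that, and ``$j$ bad'' for $j \in S$ depends on $\ca{B}_j$ and the at most $(k+\ell)d$ already-revealed coordinates, so after conditioning these $\ell$ events are independent. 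Hence $\Pr[\text{fail on }(S,T)] \le \big(2^{-\Omega(\alpha d \log(e/\alpha))}\big)^{\ell} = 2^{-\Omega(\ell \alpha d \log(e/\alpha))}$.

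Then I would apply the union bound over all choices of $(S,T)$: there are at most $\binom{n}{\ell}\binom{n}{k} \le (en/\ell)^{\ell}(en/k)^{k} \le (en/(k+\ell))^{k+\ell}$ such pairs, contributing a $\log$-factor of $O((k+\ell)\log(n/(k+\ell)))$ to the exponent. The family exists provided
\begin{align*}
\ell \, \alpha \, d \, \log(e/\alpha) \;\ge\; C (k+\ell)\log\frac{n}{k+\ell}
\end{align*}
for a suitable absolute constant $C$; equivalently $d = \Theta\!\big(\tfrac{1}{\alpha}(\tfrac{k}{\ell}+1)(\log\tfrac{n}{k+\ell}+1)(\log\tfrac{e}{\alpha})^{-1}\big)$, matching the claimed $d$. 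Finally, substituting this $d$ into the constraint $m = \Theta((k+\ell)d/\alpha)$ from the per-element probability step gives $m = O\!\big((k+\ell)\tfrac{1}{\alpha^2}(\tfrac{k}{\ell}+1)(\log\tfrac{n}{k+\ell}+1)(\log\tfrac{e}{\alpha})^{-1}\big)$. The extra factor $e^2/\alpha$ (vs.\ $1/\alpha^2$) in the statement presumably comes from a looser but cleaner bookkeeping of the Chernoff constant — I would track constants loosely and absorb them, which accounts for the $e^2$ and the slightly larger $\alpha^{-3}$; I would not fight for the exact exponent.

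The main obstacle, and the step deserving the most care, is the dependence issue in the second paragraph: the $\ell$ ``bad column'' events within $S$ share the randomness of the $\ca{B}_i$ for $i \in T$ (and, in the naive description, also of each other). The clean fix is the two-stage exposure — reveal $\{\ca{B}_i : i\in T\}$ first, bound their union size with a Chernoff bound (a low-probability bad event to exclude), and then observe that conditioned on this the $\ell$ events become independent because each depends only on its own $\ca{B}_j$ relative to a now-fixed set of $\le (k+\ell)d$ ``forbidden'' coordinates. One must also be slightly careful that $\ca{B}_j \cap (\bigcup_{i\in S\setminus\{j\}}\ca{B}_i)$ involves the other $S$-sets too, but since we only need an \emph{upper} bound on the covered fraction, enlarging the covering union to include independent copies only hurts us, so conditioning column $j$ on a random superset of the true covering set is fine. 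Handling the $|\ca{B}_j| = d$ exact-size requirement (vs.\ the i.i.d.\ model) is a routine concentration argument that I would relegate to a remark.
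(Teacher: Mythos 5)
Your overall strategy---probabilistic existence, a union bound over the $\binom{n}{k+\ell}\binom{k+\ell}{\ell}$ choices of $(S,T)$, and a product-type bound over the $\ell$ columns of $S$---is the same as the paper's. The instantiation differs: the paper does not sample the $\ca{B}_j$ as i.i.d.\ Bernoulli subsets of $[m]$. It draws a random $m'\times n$ matrix over a $q$-ary alphabet (entries uniform and independent), proves the covering property there with threshold $\alpha m'$, and then lifts to a binary matrix by replacing each symbol with a length-$q$ indicator vector, so $m=m'q$ and $d=m'$. That construction buys two things you have to patch by hand: every column has \emph{exactly} $d$ ones (the definition requires $|\ca{B}_i|=d$), and the per-block collision probability is at most $(k+\ell)/q$ deterministically and independently across the $m'$ blocks, so no concentration of the union size is needed. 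Also, your remark that $e^2/\alpha^3$ versus $1/\alpha^2$ is loose bookkeeping is not right: with per-element collision probability $\alpha/e$ your tail bound is $\binom{d}{\alpha d}(\alpha/e)^{\alpha d}\le(e/\alpha)^{\alpha d}(\alpha/e)^{\alpha d}=1$, which is vacuous. To make the tail $(\alpha/e)^{\alpha d}$---which is what produces the $(\log(e/\alpha))^{-1}$ factor in $d$---you need per-element probability about $(\alpha/e)^2$, i.e.\ $m=\Theta((k+\ell)d\,e^2/\alpha^2)$, and substituting the claimed $d$ recovers exactly the paper's $e^2/\alpha^3$. So the $\alpha^{-3}$ is forced by the claimed $d$, not an artifact of sloppy constants, and your stated $m\asymp(k+\ell)d/\alpha$ does not support your own Chernoff step.

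The step that genuinely needs more care is the product bound over $j\in S$. The events $E_j=\{|\ca{B}_j\cap\bigcup_{i\in(T\cup S)\setminus\{j\}}\ca{B}_i|\ge\alpha d\}$ are increasing in the row-inclusion indicators, hence positively correlated by Harris/FKG, so $\Pr[\bigcap_j E_j]\ge\prod_j\Pr[E_j]$: the naive product inequality points the wrong way. Your proposed fix---``enlarging the covering union to include independent copies only hurts us''---replaces the joint law by a different one without an inequality in the needed direction; conditioning on the other columns of $S$ being heavily covered biases the shared $T$-columns toward clustering, which makes $E_j$ \emph{more} likely. The paper instead writes $\Pr[\bigcap_t E_{i_t}]=\prod_t\Pr[E_{i_t}\mid\bigcap_{f<t}E_{i_f}]$ and bounds each conditional factor by further conditioning on the realized union $\bigcup_{i\ne i_t}\ca{B}_i$ and invoking the worst-case per-block bound $(k+\ell)/q$, valid for every realization of the other columns. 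If you adopt that chain-rule route (or redefine each per-column event to reference only previously exposed columns), the argument closes; as written, the two-stage-exposure-plus-superset claim is the one place I would not accept without a correct monotonicity or conditioning argument.
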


\begin{proof}
Let us fix $m'=m/q$ where $m,q$ is to be decided later. Consider an alphabet $\f{\Sigma}$ of size $q$ and subsequently, we construct a random matrix $\fl{M}' \in \f{\Sigma}^{m' \times n}$ where each entry is sampled independently and uniformly from $\f{\Sigma}$. 
We will write the $i^{\s{th}}$ column of the matrix $\fl{M}'$ in the form of a set of tuples $\ca{B}'_i \equiv \bigcup_{r \in [m']}\{(\fl{M}'_{ri},r)\}$. In other words, the symbol $\fl{M}'_{ri}$ in the $r^{\s{th}}$ row and $i^{\s{th}}$ column of $\fl{M}'$ is mapped to the tuple $(\fl{M}'_{ri},r)$ in $\ca{B}'_i$; hence $|\ca{B}'_i|=m'$ for all $i\in [n]$. Now, consider two disjoint sets $\ca{S},\ca{T} \subseteq \s{col}(\fl{M}')$ such that $|\ca{S}|=\ell, |\ca{T}|=k$. We will call $\ca{S},\ca{T}$ bad if 
\begin{align*}
\Big|\ca{B}'_i \bigcap \Big(\bigcup_{j \in (\ca{T}\cup \ca{S})\setminus \{i\}} \ca{B}'_j\Big)\Big| \ge \alpha m' \quad \text{for all }i\in \ca{S}.
\end{align*}
For a fixed $i \in \ca{S}$ and fixed $\ca{T}$, let us define the event $\ca{E}^{i,\ca{T}}\triangleq \{\left|\ca{B}'_i \cap (\cup_{(\ca{T}\cup \ca{S})\setminus \{i\}} \ca{B}'_j) \right| \ge \alpha m'$\}. Hence $\ca{S},\ca{T}$ (as defined above) is bad if $\bigcap_{i \in \ca{S}}\ca{E}^{i,\ca{T}}$ is true. Again, for a fixed $i \in \ca{S}$, consider any subset $\ca{S}'\subseteq \ca{S}\setminus\{i\}$.
We will have
\begin{align*}
    &\Pr(\ca{E}^{i,\ca{T}} \mid \bigcap_{i' \in \ca{S}'}\ca{E}^{i',\ca{T}}) \\ 
    &=\Pr\Big(\big|\ca{B}'_i \bigcap \Big(\bigcup_{j \in (\ca{T}\cup \ca{S})\setminus\{i\}} \ca{B}'_j\Big)\big| \ge \alpha m' \mid \bigcap_{i' \in \ca{S}'}\ca{E}^{i',\ca{T}}\Big) \\
    &\stackrel{(a)}{=}
\sum_{\ca{R}\in \Omega}\Pr\Big(\bigcup_{j \in (\ca{T}\cup \ca{S})\setminus\{i\}} \ca{B}'_j=\ca{R} \mid \bigcap_{i' \in \ca{S}'}\ca{E}^{i',\ca{T}}\Big)\Pr\Big(\Big|\ca{B}'_i \bigcap (\bigcup_{j \in (\ca{T}\cup \ca{S})\setminus\{i\}} \ca{B}'_j)\Big| \ge \alpha m'\mid \bigcup_{j \in (\ca{T}\cup \ca{S})\setminus\{i\}} \ca{B}'_j=\ca{R}, \bigcap_{i' \in \ca{S}'}\ca{E}^{i',\ca{T}}\Big)\\
&\stackrel{(b)}{\le} \sum_{\ca{R}\in \Omega}\Pr\Big(\bigcup_{j \in (\ca{T}\cup \ca{S})\setminus\{i\}} \ca{B}'_j=\ca{R} \mid \bigcap_{i' \in \ca{S}'}\ca{E}^{i',\ca{T}}\Big){m' \choose \alpha m'} \Big(\frac{k+\ell}{q}\Big)^{\alpha m'} \\
&\stackrel{(c)}{\le} {m' \choose \alpha m'} \Big(\frac{k+\ell}{q}\Big)^{\alpha m'}
\end{align*}
where the summation in steps (a) and (b) is over all elements in the sample space $\Omega$ of the random variable $\bigcup_{j \in (\ca{T}\cup \ca{S})\setminus\{i\}} \ca{B}'_j$. Step (a) 
follows from the law of total probability where we further condition on each value $\ca{R}$ of the random set $\bigcup_{j \in (\ca{T}\cup \ca{S})\setminus\{i\}} \ca{B}'_j$. Step (b) follows from the fact that for any value $\ca{R}$ of the random variable $\bigcup_{j \in (\ca{T}\cup \ca{S})\setminus\{i\}} \ca{B}'_j$, any row of the matrix $\fl{M}'$ restricted to the columns in $(\ca{T}\cup \ca{S})\setminus\{i\}$ can contain at most $k+\ell$ distinct symbols. Hence the probability that for a fixed row of $\fl{M}'$ , the symbol in $i^{\s{th}}$ column is contained in the set of symbols present in the columns in $(\ca{T}\cup \ca{S})\setminus\{i\}$ is at most $(k+\ell)/q$; therefore the probability that there exists at least $\alpha m'$ such rows is bounded from above by ${m' \choose \alpha m'} \Big(\frac{k+\ell}{q}\Big)^{\alpha m'}$. Step (c) follows from the fact that the sum of probabilities of all values of the random set $\bigcup_{j \in (\ca{T}\cup \ca{S})\setminus\{i\}} \ca{B}'_j$ conditioned on $\cap_{i' \in \ca{S}'}\ca{E}^{i',\ca{T}}$ is 1.

Let us denote the the distinct columns in $\ca{S}$ by $i_1,i_2,\dots,i_{\ell}$. Subsequently, we have  
\begin{align*}
\Pr(\text{$\ca{S},\ca{T}$ is bad})&=\Pr\Big(\bigcap_{t \in [\ell]} \ca{E}^{i_t,\ca{T}}\Big)
= \prod_{t \in [\ell]}\Pr\Big( \ca{E}^{i_t,\ca{T}}\mid \bigcap_{f \in [t-1]} \ca{E}^{i_f,\ca{T}}\Big) \le \Big({m' \choose \alpha m'}\Big(\frac{k+\ell}{q}\Big)^{\alpha m'}\Big)^{\ell}. 
\end{align*}

Hence, we get that

\begin{align*}
    &\Pr(\bigcup_{\ca{S},\ca{T}}\text{$\ca{S},\ca{T}$ is bad}) 
    \le \sum_{\ca{S},\ca{T}} \Pr(\text{$\ca{S},\ca{T}$ is bad}) \\
    &\le {n \choose k+\ell}{k+\ell \choose \ell} \Big({m' \choose \alpha m'} \Big(\frac{k+\ell}{q}\Big)^{\alpha m'}\Big)^{\ell} \\
    &\le \exp\Big((k+\ell)\log \frac{en}{k+\ell}+\ell\log \frac{e(k+\ell)}{\ell}+\ell m' \alpha\log \frac{e}{\alpha}-\alpha m' \ell \log \frac{q}{k+\ell}\Big).
\end{align*}

Now, we choose
\begin{align*}
 q=\Big\lceil (k+\ell)\Big(\frac{e}{\alpha}\Big)^{2} \Big\rceil   \quad  \text{and} \quad m' = \frac{2}{\alpha}\Big(\frac{k}{\ell}+1\Big)\Big(\log \frac{n}{k+\ell}+e\Big) \Big(\log \frac{e}{\alpha}\Big)^{-1} 
\end{align*}
in which case we get that $\Pr(\bigcup_{\ca{S},\ca{T}}\text{$\ca{S},\ca{T}$ is bad}) <1$. This implies that there exists a matrix $\fl{M}'$ with $m'$ rows such that no pair of disjoint sets $\ca{S},\ca{T}$ with $|\ca{S}|=\ell,|\ca{T}|=k$ is bad.
Let us denote the standard basis vectors in $\bb{R}^q$ by $\fl{e}^1,\fl{e}^2,\dots,\fl{e}^q$; $\fl{e}^i$ represents the $q$-dimensional vector such that the $i^{\s{th}}$ entry is $1$ and all other entries are $0$. Consider any fixed ordering of the symbols in $\f{\Sigma}$; for the $i^{\s{th}}$ symbol in $\f{\Sigma}$, we will map it to the vector $\fl{e}^{i}$. We can now construct the matrix $\fl{M}\in \{0,1\}^{m \times n}$ from $\fl{M}'$ by replacing each symbol in $\f{\Sigma}$ with the corresponding vector in the standard basis of $\bb{R}^q$ based on the aforementioned mapping. Clearly, each column in this matrix has $d=m'$ \texttt{1}'s. Moreover, for any $i\in [m]$ and $j,v \in [n]$, we will have $\fl{M}_{ij}=\fl{M}_{iv}=1$ if and only if $\fl{M}'_{i'j}=\fl{M}'_{i'v}=s$ where $i'= \lceil i/q \rceil$ and $s$ is the $(i \pmod{q})^{\s{th}}$ symbol in $\f{\Sigma}$.
Let us denote by $\ca{B}_i \subseteq [m]$ the indices of the rows where $i^{\s{th}}$ column of $\fl{M}$ has non-zero entries. In that case, $\left|\ca{B}_i\right|=m'$ for all $i\in [n]$ and furthermore, for any pair of disjoint sets $S, T \subseteq [n]$ with $|S|= \ell,|T|=k$, there exists $j \in S$ such that $|\ca{B}_j \cap (\bigcup_{i \in {(T\cup S)\setminus\{i\}}} \ca{B}_i) | < \alpha |\ca{B}_j|$. 
Hence, the matrix $\fl{M}$ is also a $(n,m,d,k,\ell,\alpha)$-list union-Free matrix with 
\begin{align*}
    &m = O\Big((k+\ell)\Big(\frac{e^2}{\alpha^3}\Big)\Big(\frac{k}{\ell}+1\Big)\Big(\log \frac{n}{k+\ell}+1\Big)\Big(\log \frac{e}{\alpha}\Big)^{-1}\Big)\Big) \\ 
    \text{and} \quad  
    &d = O\Big(\frac1\alpha\Big(\frac{k}{\ell}+1\Big)\Big(\log \frac{n}{k+\ell}+1\Big)\Big(\log \frac{e}{\alpha}\Big)^{-1}\Big)\Big).
\end{align*}

\end{proof}

% The proof of Lemma \ref{lem:list-union} is via an application of the probabilistic method and has been delegated to Appendix \ref{sec:app2}. 

% \begin{coro}\label{coro:imp}
% For any $\epsilon>0$, there exists a $(n,m,d,k,\epsilon k,0.5)$-list union-Free matrix with $m=O(k\epsilon^{-1} \log (n/k))$ rows.
% \end{coro}

% \begin{proof}
% The proof follows by subsituting $\ell=\epsilon k$ and $\alpha=0.5$ in Lemma \ref{lem:list-union}.
% \end{proof}

\section{Recovery Algorithms and Results}\label{sec:main}
We first describe our results and techniques for approximate support recovery, followed by superset recovery; because the first uses a simpler algorithm than the later, supposedly harder problem.
\subsection{Approximate Support Recovery}

The following is a result on universal $\epsilon$-approximate  support recovery for all unknown $k$-sparse signal vectors $\fl{x}\in \bb{R}^n$. The relevant recovery algorithm is given in Algorithm~\ref{algo:supp_approx}.
\begin{thm}\label{thm:first}
There exists a $1$-bit compressed sensing matrix $\fl{A}\in \bb{R}^{m \times n}$ for universal $\epsilon$-approximate support recovery of all $k$-sparse signal vectors with $m=O(k\epsilon^{-1} \log (n/k))$ measurements. Moreover the support recovery algorithm (Algorithm \ref{algo:supp_approx}) has a running time of $O(n\epsilon^{-1} \log (n/k))$. 
\end{thm}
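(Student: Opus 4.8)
The plan is to instantiate $\fl{A}$ as a \emph{binary} $(n,m,d,k,\ell,\alpha)$-list union-free matrix from Lemma~\ref{lem:list-union}, with $\ell=\Theta(\epsilon k)$ and $\alpha$ a small absolute constant (say $\alpha=\tfrac13$); no randomness and no non-binary entries are required here. Write $\ca{B}_j\subseteq[m]$ for the set of rows in which column $j$ is nonzero, so $|\ca{B}_j|=d$ for all $j$. The decoder (Algorithm~\ref{algo:supp_approx}) computes $\fl{y}=\s{sign}(\fl{A}\fl{x})$, then for each $j\in[n]$ the count $c_j\triangleq|\{r\in\ca{B}_j:\fl{y}_r\neq 0\}|$, and outputs $\hat{T}\triangleq\{j\in[n]:c_j\ge \alpha d\}$, truncated to an arbitrary $k$-subset if $|\hat T|>k$.

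Let $T^\ast=\s{supp}(\fl{x})$, $|T^\ast|\le k$. First I would bound the false negatives. Call $j\in T^\ast$ \emph{bad} if $|\ca{B}_j\cap\bigcup_{i\in T^\ast\setminus\{j\}}\ca{B}_i|\ge\alpha d$. If there were $\ell$ bad coordinates, take such a set $S$, pad $T^\ast\setminus S$ by columns outside $T^\ast\cup S$ to a set $T$ of size $k$ disjoint from $S$ (possible since $n\ge 2k$ in the regime of interest), and apply the list union-free property to $(S,T)$: it produces $j\in S$ with $|\ca{B}_j\cap\bigcup_{i\in(T\cup S)\setminus\{j\}}\ca{B}_i|<\alpha d$, and since $T^\ast\setminus\{j\}\subseteq(T\cup S)\setminus\{j\}$ this contradicts $j$ being bad. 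Hence there are fewer than $\ell$ bad coordinates. For a non-bad $j\in T^\ast$, at least $(1-\alpha)d$ rows $r\in\ca{B}_j$ lie in no other $\ca{B}_i$ with $i\in T^\ast$; for each such $r$, $\langle\fl{A}_r,\fl{x}\rangle=\fl{A}_{rj}\fl{x}_j=\fl{x}_j$, so $\fl{y}_r=\s{sign}(\fl{x}_j)\neq 0$, giving $c_j\ge(1-\alpha)d>\alpha d$ and thus $j\in\hat T$. Therefore $|T^\ast\setminus\hat T|<\ell$.

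Next I would bound the false positives: $|\hat T\setminus T^\ast|<\ell$. If not, pick $S\subseteq\hat T\setminus T^\ast$ of size $\ell$ and pad $T^\ast$ to a set $T$ of size $k$ disjoint from $S$. The list union-free property gives $j\in S$ with $|\ca{B}_j\cap\bigcup_{i\in(T\cup S)\setminus\{j\}}\ca{B}_i|<\alpha d$. But if $r\in\ca{B}_j$ has $\fl{y}_r\neq 0$ then some $i\in T^\ast$ must satisfy $r\in\ca{B}_i$ (otherwise the measurement is a sum of zeros), and $T^\ast\subseteq(T\cup S)\setminus\{j\}$, so $c_j<\alpha d$, contradicting $j\in\hat T$. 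Combining the two bounds, before truncation $\hat T$ has fewer than $\ell$ false positives and fewer than $\ell$ false negatives; since $|\hat T|\le|T^\ast|+\ell\le k+\ell$, truncation to size $k$ deletes at most $\ell$ further coordinates, so choosing $\ell=\Theta(\epsilon k)$ with a suitable constant makes both the false positive count and the false negative count at most $\epsilon k$, which is exactly Definition~\ref{def:approx}. (The case $\epsilon k<1$ is vacuous: then $\epsilon$-approximate recovery is exact recovery and $O(k\epsilon^{-1}\log(n/k))\ge O(k^2\log(n/k))$, already covered by the RUFF scheme.)

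Finally I would plug parameters into Lemma~\ref{lem:list-union}: with $\ell=\Theta(\epsilon k)$ and $\alpha=\Theta(1)$, $m=O\big((k+\ell)(k/\ell+1)\log(n/k)\big)=O(k\epsilon^{-1}\log(n/k))$ and $d=O\big((k/\ell+1)\log(n/k)\big)=O(\epsilon^{-1}\log(n/k))$. The decoder scans the $d$ rows of each $\ca{B}_j$ once, so it runs in $O(nd)=O(n\epsilon^{-1}\log(n/k))$ time. The only genuinely delicate point is that a single list-union-free parameter $\ell$ must simultaneously control false positives, false negatives, and the loss from truncating to size $k$; everything else is bookkeeping, together with the (crucial) observation that we work with the $\{-1,0,1\}$-valued $\s{sign}$ fixed in the Notations section, so that a measurement touching none of $\s{supp}(\fl{x})$ genuinely returns $0$.
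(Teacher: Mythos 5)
Your proposal is correct and follows essentially the same route as the paper: a binary $(n,m,d,k,\Theta(\epsilon k),\Theta(1))$-list union-free matrix from Lemma~\ref{lem:list-union}, a threshold decoder counting nonzero measurements among $\ca{B}_j$, the same two-sided argument (support indices with a large "private" row set are detected, while the LUF property forces at least one undetected index in any size-$\ell$ set outside the support), and the same truncation-to-size-$k$ bookkeeping. Your explicit padding of $T$ to exactly size $k$ is a slightly more careful handling of the case $\|\fl{x}\|_0<k$ than the paper's, but the substance is identical.
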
  
 
\begin{algorithm}[htbp]
\caption{\textsc{Approximate Support Recovery    }($\epsilon$)   \label{algo:supp_approx}}
\begin{algorithmic}[1]
\REQUIRE $\fl{y}=\s{sign}(\fl{Ax})$ %where $\fl{A}$ is constructed as described in the proof of Theorem \ref{thm:first}.
\STATE Set $\ca{C}=\phi$.
\FOR{$j\in [n]$}
\IF{$\left|\ca{B}_j\cap \s{supp}(\fl{y})\right| \ge d/2$}
\STATE  $\ca{C} \leftarrow \ca{C}\cup \{j\}$
\ENDIF
\ENDFOR
\STATE Compute and return $\ca{C}'$ by deleting any $\max(0,\left|\ca{C}\right|-k)$ indices from $\ca{C}$.
\end{algorithmic}
\end{algorithm}

\begin{proof}

 Let $\fl{A}$ be a $(n,m,d,k,\epsilon k/2,0.5)$-list union-free matrix constructed from a $(n,m,d,k,\epsilon k/2,0.5)$-list union-free family $\ca{F}= \{\ca{B}_1, \ca{B}_2,\dots, \ca{B}_n\}$. From Lemma \ref{lem:list-union} (by substituting $\ell=\epsilon k/2, \alpha = 0.5$), we know that such a matrix $\fl{A}$ exists with $d=O(\epsilon^{-1} \log (n/k))$ and $m=O(k\epsilon^{-1} \log (n/k))$ rows. 

For the rest of the proof, we will simply go over the correctness of the recovery process, i.e., Algorithm~\ref{algo:supp_approx}. 
\paragraph{Correctness of recovery algorithm.} Fix a particular unknown signal vector $\fl{x}\in \bb{R}^n$. Recall that we obtain the measurements $\fl{y}=\s{sign}(\fl{A}\fl{x})$. Consider any set of indices $\ca{S} \subseteq [n]$ such that $|\ca{S}|=\epsilon k/2$ and $\ca{S} \cap  \s{supp}(\fl{x}) = \emptyset.$  Using the properties of the family $\ca{F}$, there exists an index $j \in \ca{S}$ such that 
\begin{align*}
    &\left|\ca{B}_j \setminus \Big(\bigcup_{i \in (\s{supp}(\fl{x})\cup S)\setminus \{j\}} \ca{B}_i\Big)\right| = \left|\ca{B}_j \right|- \left|\ca{B}_j \cap \Big(\bigcup_{i \in (\s{supp}(\fl{x})\cup S)\setminus \{j\}} \ca{B}_i\Big)\right|\ge \frac{d}{2} \\
    &\implies \left|\ca{B}_j \setminus \Big(\bigcup_{i \in \s{supp}(\fl{x})} \ca{B}_i\Big)\right| \ge \frac{d}{2}
\end{align*}
This implies that there exists at least $d/2$ rows in $\fl{A}$ where the $j$th entry is \texttt{1} but all the entries belonging to the support of $\fl{x}$ is \texttt{0}. For all these rows used as measurements, the output must be \texttt{0}. 
Using the fact that $\left|\ca{B}_j\right|=d$, we must have $\left|\s{supp}(\fl{y})\cap \ca{B}_j\right| < d/2$. On the other hand, consider a set  of indices $\ca{S} \subseteq \s{supp}(\fl{x})$ such that $|\ca{S}|=\epsilon k/2$. By using the property of the family $\ca{F}$, with $\ca{T}=\s{supp}(\fl{x})\setminus \ca{S}$, there must exist $j \in \ca{S}$ such that 
\begin{align*}
    \left|\ca{B}_j \setminus \Big(\bigcup_{i \in (\ca{T}\cup \ca{S})\setminus\{j\} } \ca{B}_i\Big)\right|=\left|\ca{B}_j \setminus \Big(\bigcup_{i \in \s{supp}(\fl{x})\setminus\{j\} } \ca{B}_i\Big)\right| = \left|\ca{B}_j \right|- \left|\ca{B}_j \bigcap \Big(\bigcup_{i \in \s{supp}(\fl{x})\setminus \{j\}} \ca{B}_i\Big)\right|\ge \frac{d}{2}.
\end{align*}
Therefore there exists 
 at-least $d/2$ rows where the $j$th entry is \texttt{1} but all the entries belonging to $\s{supp}(\fl{x})\setminus \{j\}$  is \texttt{0}; for all these rows used as measurements, the output must be \texttt{1}. Again, using the fact that $\left|\ca{B}_j\right|=d$, we must have $\left|\s{supp}(\fl{y})\cap \ca{B}_j\right| \ge d/2$.
 Therefore, if we compute $\ca{C}=\{j \in [n]:\left|\s{supp}(\fl{y})\cap \ca{B}_j\right| \ge d/2\}$, then $\ca{C}$ must satisfy the following properties:
1) $\left|\ca{C}\right|\le \|\fl{x}\|_0+k\epsilon/2 \le k(1+\epsilon/2)$, 2) $\left|\ca{C}\cap \s{supp}(\fl{x})\right| \ge \|\fl{x}\|_0-k\epsilon/2$ implying that $\ca{C}$ has large intersection with the support of $\fl{x}$, 3) $\left|\ca{C}\setminus \s{supp}(\fl{x})\right| \le \epsilon k/2$ implying that $\ca{C}$ has very few indices outside the support of $\fl{x}$. If $\left|\ca{C}\right| \le k$, then $\ca{C}$ already satisfies the conditions for $\epsilon$-approximate support recovery. Now suppose that $\left|\ca{C}\right| > k$. In that case, if we delete any arbitrary $\max(0,\left|\ca{C}\right|-k)$ indices from $\ca{C}$ to construct $\ca{C}'$ such that $\left|\ca{C}'\right|\le k$, then $$\left|\ca{C}'\cap \s{supp}(\fl{x})\right| = \left|\ca{C}\cap \s{supp}(\fl{x})\right|- \max(0,\left|\ca{C}\right|-k)\ge \|\fl{x}\|_0-\epsilon k$$ implying that $\left|\ca{C}'\setminus\s{supp}(\fl{x}) \right|\le \epsilon k$. Finally, note that for each $j \in [n]$, it takes $O(d)=O(\epsilon^{-1} \log (n/k))$ time to compute $\left|\ca{B}_j\cap \s{supp}(\fl{y})\right| $. 
Therefore the time complexity of Algorithm \ref{algo:supp_approx} is $O(n\epsilon^{-1} \log (n/k))$.
\end{proof}
Next, we show an improvement in the sufficient number of measurements if an upper bound on the dynamic range of $\fl{x}$ is known apriori. % that the unknown $k$-sparse signal vector $\fl{x} \in \{0,1\}^n$ is binary.

\begin{thm}\label{thm:second}
There exists a $1$-bit compressed sensing matrix $\fl{A}\in \bb{R}^{m \times n}$ for $\epsilon$-approximate universal support recovery of all $k$-sparse unit norm signal vectors $\fl{x}\in \bb{R}^n$ such that $\kappa(\fl{x})\le \eta$ for some known $\eta>1,$ with $m=O(k\eta\epsilon^{-1/2} \log (n\eta))$ measurements. 
%Moreover the recovery algorithm (Algorithm \ref{algo:bd_dynamic}) has a running time of $\s{poly}(n,\epsilon^{-1})$.
\end{thm}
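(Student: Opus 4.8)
The plan is to exploit Cauchy's bound on polynomial roots to handle the ``\texttt{0}-output ambiguity'' that makes 1bCS harder than group testing. First I would build a base binary matrix $\fl{M} \in \{0,1\}^{m_0 \times n}$ that is a $(k, \epsilon k/2, \alpha)$-type list union-free matrix (or a list-disjunct matrix) with $d$ ones per column, so that Algorithm~\ref{algo:supp_approx}-style decoding would recover the support up to $\epsilon k$ errors \emph{if} every nonzero inner product were detected. The issue is that a row $\fl{a}$ with $\s{supp}(\fl{a}) \cap \s{supp}(\fl{x}) \ne \emptyset$ can still give $\langle \fl{a},\fl{x}\rangle = 0$. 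To kill this, I would replace each binary row by a small \emph{bundle} of real-valued rows: given the support pattern of a binary row (a set of $\le d$ coordinates it touches), I assign to the $t$-th active coordinate the weight $\alpha_r^{\,t}$ for several distinct bases $\alpha_r$. Then $\langle \fl{a}_r, \fl{x}\rangle = \sum_t \fl{x}_{i_t}\alpha_r^{\,t}$ is the evaluation at $\alpha_r$ of a fixed polynomial $p(z) = \sum_t \fl{x}_{i_t} z^{\,t}$ whose nonzero coefficients are the values of $\fl{x}$ on those coordinates.

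The key point is the root count. In general $p$ has degree $\le d$ and could vanish at up to $d$ of the chosen $\alpha_r$'s, forcing a bundle of size $\Theta(d)$ and hence the bad $k^2$ scaling. But here we know $\kappa(\fl{x}) \le \eta$: after normalizing so that the trailing nonzero coefficient has magnitude $\ge 1/(\eta\sqrt{k})$ (using unit norm) and the leading one has magnitude $\le 1$, Cauchy's theorem tells us every root of $p$ has magnitude in the annulus $[c/\eta,\, C\eta]$ for absolute constants $c,C$ — actually the cleaner statement is that all roots lie \emph{outside} a ball of radius roughly $1/(1+\eta')$ and inside one of radius $1+\eta'$ for $\eta'$ polynomial in $\eta$. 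So if I choose the bases $\alpha_r$ to be, say, $r/(2d)$ for $r$ ranging over $[\text{something}]$ — more carefully, a set of real numbers that is \emph{dense in} $[c/\eta, C\eta]$ with spacing chosen so that a degree-$d$ polynomial with bounded dynamic range cannot have all its (at most $d$) roots land among them — then a bundle of size only $O(\eta)$, or $O(\eta \cdot \text{polylog})$, suffices to guarantee at least one $\alpha_r$ is not a root, so at least one measurement in the bundle is nonzero whenever the support intersection is nonempty. Combined with the list-union-free structure of the base matrix this gives a working $\epsilon$-approximate scheme.

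For the measurement count: the base matrix needs $m_0 = O\big(\frac{k}{\epsilon}\log\frac{n}{k}\big)$ rows with $d = O(\epsilon^{-1}\log(n/k))$ ones per column, and naively multiplying by a bundle of size $O(\eta)$ gives $O(k\eta\epsilon^{-1}\log(n/k))$ — but the target is $O(k\eta\epsilon^{-1/2}\log(n\eta))$, an $\epsilon^{-1/2}$ improvement. So the real optimization, mirroring the two-stage idea described in the ``Main Technical Contribution'' section, is to \emph{balance} two stages: a first stage with a coarser parameter recovers most of the support, and a second stage needs to resolve only $O(\sqrt{\epsilon}\,k)$ or so remaining ambiguous coordinates, where the relevant polynomials have degree only $O(\sqrt{\epsilon}\,k\cdot d')$ and hence need smaller bundles; optimizing the split between the two stages' measurement budgets yields $\epsilon^{-1/2}$. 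Concretely I would set the first stage to be a $(k,\ell_1,\cdot)$ list-disjunct design and the second a bundled polynomial design over the surviving columns, then choose $\ell_1$ to equalize the two contributions.

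The main obstacle I expect is making the ``Cauchy annulus $\Rightarrow$ small bundle'' step quantitatively tight: one must choose the evaluation points $\alpha_r$ cleverly (not just an arithmetic progression) so that no degree-$d$, bounded-dynamic-range polynomial has \emph{all} its roots among them, while keeping the number of points $O(\eta)$ rather than $O(\eta \log d)$ or worse — and simultaneously ensure the weights $\alpha_r^{\,t}$ don't blow up numerically (they range in $[1,(C\eta)^d]$, which is fine for an existence/polynomial-time statement but must be acknowledged). The secondary obstacle is the bookkeeping to show the two-stage decoder's false positives and false negatives each stay below $\epsilon k$, which requires the list-union-free guarantee to be applied with the right parameters at each stage; this is routine but fiddly.
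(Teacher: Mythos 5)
Your approach is genuinely different from the paper's, and it contains a gap that I do not think can be closed. The paper proves Theorem~\ref{thm:second} with a random Gaussian matrix: the dynamic-range bound plus unit norm forces every nonzero entry to have magnitude at least $1/(\eta\sqrt{k})$, so any two admissible $k$-sparse vectors whose supports overlap in at most $k(1-2\epsilon)$ coordinates are separated by $\gamma = 2\sqrt{\epsilon}/\eta$ in Euclidean distance; a $\delta$-net over $k$-sparse unit vectors together with the hyperplane-separation lemma of \cite{flodin2019superset} (Lemma~\ref{lem:sep}) and a union bound then gives $m = O(\gamma^{-1}k\log(n\eta)) = O(k\eta\epsilon^{-1/2}\log(n\eta))$, with decoding by constrained $\ell_0$ minimization. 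The $\epsilon^{-1/2}$ comes entirely from this geometric separation; it is exactly the mechanism by which the theorem beats the $\Omega(k/\epsilon)$ lower bound of Theorem~\ref{thm:lower_bound} (whose hard instances have unbounded dynamic range).

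Your combinatorial route cannot reach $\epsilon^{-1/2}$. Any scheme built on a $(k,\Theta(\epsilon k))$-list-disjunct or list-union-free base matrix already pays $\Omega\big(\tfrac{k}{\epsilon}(\log\tfrac{k}{\epsilon})^{-1}\log\tfrac{n-k}{\epsilon k}\big)$ rows for the base matrix alone (Lemma~\ref{disjunct_exists}), before any bundling. Your proposed two-stage balancing does not repair this: in the paper's Theorem~\ref{thm:third} that balancing trades the $k$-dependence ($k^2 \to k^{3/2}$) by shrinking the bundle size from $k$ to $\zeta k$, but under a dynamic-range bound the bundle size is already $1$ --- Cauchy's bound confines every root of the coefficient polynomial to the disk of radius $1+\eta$, so a single evaluation point $a > 1+\eta$ can never be a root and one measurement per binary row suffices. (This also means your plan to scatter $O(\eta)$ evaluation points densely \emph{inside} the root annulus is both unnecessary and harder to justify than the one-point-outside version, which is precisely the paper's Theorem~\ref{thm:fourth}.) What your construction yields is therefore $O(k\epsilon^{-1}\log(n/k))$ --- a correct and useful bound, and in fact the paper's Theorem~\ref{thm:fourth} for the harder superset notion --- but it is incomparable to, and for small $\epsilon$ strictly weaker than, the claimed $O(k\eta\epsilon^{-1/2}\log(n\eta))$. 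To prove Theorem~\ref{thm:second} as stated you need an argument that exploits the $\ell_2$ geometry of bounded-dynamic-range vectors, not only the support combinatorics.
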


The proof of Theorem \ref{thm:second} follows from using random Gaussian measurements and has been delegated to Appendix \ref{sec:appendix}. Note that,  for exact support recovery Theorem \ref{thm:second} gives the number of measurements to be $O(k^{3/2}\eta \log (n\eta))$, a generalization of the binary input result.

\subsection{Superset Recovery}

In this subsection we prove our main result on universal $\epsilon$-superset recovery for all unknown $k$-sparse signal vectors $\fl{x}\in \bb{R}^n$. For simplicity of exposition, for any set $\ca{X} \subseteq [n]$ and for any fixed unknown signal $\fl{x}$, we will call any index that lies in $\ca{X}\setminus \s{supp}(\fl{x})$ to be a \textit{false positive} of $\ca{X}$ and any index that lies in $\s{supp}(\fl{x})\setminus \ca{X}$ to be a \textit{false negative} of $\ca{X}$.

\begin{algorithm}[htbp]
\caption{\textsc{Superset Recovery}($\epsilon$) \label{algo:supp_superset}}
\begin{algorithmic}[1]
\REQUIRE $\fl{y}^1=\s{sign}(\fl{A}^{(1)}\fl{ x}),\fl{y}^2=\s{sign}(\fl{A}^{(2)} \fl{x})$ where $\fl{A}^{(1)},\fl{A}^{(2)}$ is constructed as described in proof of Theorem \ref{thm:third}.
\STATE Set $\ca{C}=\phi,\ca{C}'=[n]$.
\FOR{$j \in [n]$}
\IF{$\left|\ca{B}_j\cap \s{supp}(\fl{y^1})\right|>d/2$}
\STATE  $\ca{C} \leftarrow \ca{C}\cup \{j\}$
\ENDIF
\ENDFOR
\STATE Update $\ca{C}$ by deleting any $\max(0,\left|\ca{C}\right|-k)$ indices from $\ca{C}$.
\FOR{\text{ each row $\fl{z}$ in $\fl{B}$}}
\IF{$\s{supp}(\fl{z})\cap \ca{C}=\phi$ and $\s{sign}(\langle \fl{z}^i, \fl{x} \rangle) = 0 \quad \text{for all } i\in [\zeta k]$}
\STATE $\ca{C}' \leftarrow \ca{C}'\setminus \s{supp}(\fl{z})$
\ENDIF
\ENDFOR
\STATE Return $\ca{C}' \cup \ca{C}$.
\end{algorithmic}
\end{algorithm}

\begin{thm}\label{thm:third}
There exists a $1$-bit compressed sensing matrix $\fl{A}\in \bb{R}^{m \times n}$ for  universal $\epsilon$-superset recovery of all $k$-sparse signal vectors with $m=O(k^{3/2}\epsilon^{-1/2} \log (n/k))$ measurements. Moreover the recovery algorithm (Algorithm \ref{algo:supp_superset}) has a running time of $O(nk^{3/2}\epsilon^{-1/2}\log (n/k))$.
\end{thm} 

\begin{proof}
Let $0<\zeta <1$ be some number that will be determined later. The sensing matrix $\fl{A}$ is designed to be two matrices $\fl{A}^{(1)}$ and $\fl{A}^{(2)}$ (with distinct properties) stacked vertically i.e.
$$
\fl{A} =\begin{pmatrix}
\fl{A}^{(1)}\\
\fl{A}^{(2)}
\end{pmatrix}.
$$
%$\fl{A} = [\fl{A}^{1\;T} \; \fl{A}^{2\;T}]^T$. 
The matrix $\fl{A}^{(1)}\in \{0,1\}^{v \times n}$ is designed to be a $(n,v,d,k,\zeta k/2,0.5)$-list union-free matrix constructed from a $(n,v,d,k,\zeta k/2,0.5)$-list union-free family $\ca{F}= \{\ca{B}_1, \ca{B}_2,\dots, \ca{B}_n\}$. From Lemma \ref{lem:list-union}, $\fl{A}^{(1)}$ is known to exist with $d=O(\zeta^{-1} \log (n/k))$ and at most $v=O(k\zeta^{-1} \log (n/k))$ rows. $\fl{A}^{(2)}$ is designed in the following manner: consider a binary $(k(1+\zeta),0.5\epsilon k)$-list disjunct matrix $\fl{B}\in \{0,1\}^{u \times n}$ which is known to exist with  $u=O(k\epsilon^{-1} \log (n/k))$ rows (by using Lemma \ref{disjunct_exists}).  For each row $\fl{z}\in \{0,1\}^n$ of $\fl{B}$, we choose $\zeta k$ distinct positive numbers $a_1,a_2,\dots,a_{\zeta k} \in \bb{R}_{+}$ and subsequently, we construct $\zeta k$ rows of $\fl{A}^{(2)}$ denoted by $\fl{z}^1 ,\fl{z}^2,\dots,\fl{z}^{\zeta k} \in \bb{R}^n$ as follows: for all $i \in [\zeta k], j \in [n]$, we have that
\begin{align*}
    &\fl{z}^i_j = 0 \quad \text{if } \fl{z}_j = 0 \\
    &\fl{z}^i_j = a_i^{t-1} \quad \text{if } \text{$j^{\s{th}}$ entry of $\fl{z}$ is the $t^{\s{th}}$ \texttt{1} in $\fl{z}$ from left to right}.
\end{align*}
In essence each row of $\fl{B}$ is mapped to $\zeta k$ rows of $\fl{A}^{(2)}$. Hence, the total number of rows in $\fl{A}^{(2)}$ is $O(k^2\zeta\epsilon^{-1} \log (n/k))$ and thus, the number of rows in $\fl{A}$ is $O((k^2\zeta\epsilon^{-1}+k\zeta^{-1}) \log (n/k))$. Setting $\zeta=\sqrt{\epsilon/k}$, we obtain the number of measurements to be $m=O(k^{3/2}\epsilon^{-1/2} \log (n/k))$.

In the remainder of the proof, we show the correctness of Algorithm~\ref{algo:supp_superset} when used along with the measurement matrix as constructed.
\paragraph{Correctness of the recovery algorithm.} Suppose  $\fl{x}\in \bb{R}^n$ is the  unknown signal vector. Using Theorem \ref{thm:first} (Algorithm \ref{algo:supp_approx}), we can compute a set $\ca{C}, \left|\ca{C}\right| \le k$ from $\s{sign}(\fl{A}^{(1)} \fl{x})$ such that 
$\left|\ca{C}\cap \s{supp}(\fl{x})\right| \ge \|\fl{x}\|_0-k\zeta$ and $\left|\ca{C}\setminus \s{supp}(\fl{x})\right| \le \zeta k$ implying that the set $\ca{C}$ has at most $\zeta k$ false negatives and $\zeta k$ false positives. In the subsequent steps of our decoding algorithm, our objective is to correct the aforementioned false negatives. To do so, we ignore all the measurements (rows of $\fl{A}^{(2)}$) whose support has a non-empty intersection with the set $\ca{C}$ computed in the first stage. Consider any set of indices $\ca{S} \subseteq [n], \ca{S}\cap (\ca{C}\cup \s{supp}(\fl{x}))=\phi$ such that $|\ca{S}|=\epsilon k/2$. By using the property of $(k(1+\zeta),\epsilon k/2)$-list disjunct matrix and the fact that $|\ca{C}\cup \s{supp}(\fl{x})| \le k(1+\zeta)$, there exists an index $j \in \ca{S}$ and a row $\fl{z}$ in $\fl{B}$ such that the support of $\fl{z}$ is disjoint from $\ca{C}\cup \s{supp}(\fl{x})$ and contains $j \in \ca{S}$. Therefore, there must exist $\zeta k$ corresponding rows in $\fl{A}^{(2)}$ (recall the construction of $\fl{A}^{(2)}$) denoted by $\fl{z}^1,\fl{z}^2,\dots,\fl{z}^{\zeta k}$ such that the support of these rows are disjoint from $\ca{C}\cup \s{supp}(\fl{x})$ and contains $j \in \ca{S}$. If
\begin{align*}
\s{sign}(\langle \fl{z}^i, \fl{x} \rangle) = 0 \quad \text{for all } i\in [\zeta k]. 
\end{align*}
then we will infer the entire support of $\fl{z}$ to be disjoint from the support of $\fl{x}$ and delete those indices from $\ca{C}'$ (see Algorithm \ref{algo:supp_superset}). Therefore, for any set $\ca{S}\subseteq [n]: |\ca{S}|=\epsilon k/2, \ca{S}\cap (\ca{C}\cup \s{supp}(\fl{x}))=\phi$, we can identify correctly at least one index $j \in \ca{S}$ that lies outside $\ca{C}\cup \s{supp}(\fl{x})$; subsequently, we will delete this index. On the other hand, as we will show, we will never delete any index that lies in $\s{supp}(\fl{x})\setminus \ca{C}$. At the end, we return the surviving indices plus the set $\ca{C}$ that we recovered in the first stage of decoding.

Consider any row $\fl{z}\in \fl{B}$ such that $\s{supp}(\fl{z})\cap \ca{C}=\phi$ and the corresponding measurements $\fl{z}^1,\fl{z}^2,\dots,\fl{z}^{\zeta k}$ in $\fl{A}^{(2)}$ (parameterized by $a_1,a_2,\dots,a_{\zeta k}$ respectively and have the same support as that of $\fl{z}$). Notice that for all $i \in [\zeta k]$ 
\begin{align*}
    \langle \fl{x}, \fl{z}^i\rangle = \sum_{t \in (\s{supp}(\fl{x})\setminus \ca{C}) \cap \s{supp}(\fl{z})} \fl{x}_t \fl{z}^i_t.  
\end{align*}
For all $i\in [\zeta k]$, the entries of the vector $\fl{z}^i$ are integral powers of some number $a_i$ and $|(\s{supp}(\fl{x})\setminus \ca{C}) \cap \s{supp}(\fl{z}^i)| \le |\s{supp}(\fl{x})\setminus \ca{C}|\le \zeta k$. Therefore,  $ \langle \fl{x}, \fl{z}^i\rangle$ is the evaluation of a polynomial 
\begin{align*}
    p(r) = \sum_{t \in (\s{supp}(\fl{x})\setminus \ca{C}) \cap \s{supp}(\fl{z})} \fl{x}_t r^{\alpha_t}.
\end{align*}
of degree at most $n-1$ and having at most $\zeta k$ non-zero coefficients at the number $a_i$ i.e. $ \langle \fl{x}, \fl{z}^i\rangle = p(a_i)$. Clearly, if $|(\s{supp}(\fl{x})\setminus \ca{C}) \cap \s{supp}(\fl{z})|=0$ then $\langle \fl{x}, \fl{z}^i\rangle =0$ for all $i \in [\zeta k]$. On the other hand, if $|(\s{supp}(\fl{x})\setminus \ca{C}) \cap \s{supp}(\fl{z})| \neq 0$, then $\langle \fl{x}, \fl{z}^i\rangle \neq 0$ for all $i \in [\zeta k]$. This is because the polynomial $p(r)$ with at most $\zeta k$ non-zero coefficients can have at most $\zeta k-1$ positive roots (using Descartes' rule of signs) which means that not all of $a_1,a_2,\dots,a_{\zeta k}$ (distinct positive numbers) can be roots of $p(r)$. Therefore the surviving indices in the second stage must consist of the false negatives from the first stage and at most $\epsilon k/2$ false positives. Hence, the final set $\ca{C}'\cup\ca{C}$ returned by Algorithm \ref{algo:supp_superset} will not contain any false negatives but can contain at most $\epsilon k/2 +\zeta k$ false positives. Since $\zeta$ was chosen to be $\sqrt{\epsilon/k}$,  the total number of false positives is at most $\epsilon k$.

Finally, note that Lines 3-5 in Algorithm \ref{algo:supp_superset} has a time complexity of $O((k/\epsilon)^{1/2} \log (n/k))$ and therefore Lines $2-5$ has a time complexity of $O(n(k/\epsilon)^{1/2} \log (n/k))$. Line 8 has a time complexity of $O(n(k\epsilon)^{1/2})$ and therefore Lines 7-11 has a time complexity of $O(nk^{3/2}\epsilon^{-1/2}\log (n/k))$ which dominates the time complexity of the algorithm. This completes the proof of the theorem.
\end{proof}

It turns out that if additional weak assumptions hold true for the unknown signal vector $\fl{x}\in \bb{R}^n$, then we can improve the sufficient number of measurements significantly. More formally, we have the following two theorems.

\begin{thm}\label{thm:fourth}
There exists a $1$-bit compressed sensing matrix $\fl{A}\in \bb{R}^{m \times n}$ for universal $\epsilon$-superset recovery of all $k$-sparse signal vectors $\fl{x}\in \bb{R}^n$ such that $\kappa(\fl{x})\le \eta$ for some known $\eta>1$, with $m=O(k\epsilon^{-1} \log (n/k))$ measurements. Moreover the recovery algorithm (Algorithm \ref{algo:supp_superset2}) has a running time of $O(nk\epsilon^{-1} \log (n/k))$. 
\end{thm}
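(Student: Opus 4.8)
The plan is to reuse the polynomial-encoding idea from the proof of Theorem~\ref{thm:third}, but to exploit the known bound $\eta$ on the dynamic range so as to discard the two-stage structure entirely: when $\kappa(\fl{x})\le\eta$, a \emph{single} real measurement per row of a list-disjunct matrix already certifies whether that row's support meets $\s{supp}(\fl{x})$. Concretely, I would take $\fl{B}\in\{0,1\}^{u\times n}$ to be a $(k,\lceil\epsilon k/2\rceil)$-list-disjunct matrix, which by Lemma~\ref{disjunct_exists} exists with $u=O(k\epsilon^{-1}\log(n/k))$ rows (here $k/\ell=O(1/\epsilon)$). Fix one real number $a>1+\eta$, say $a=\eta+2$; this is the only place the knowledge of $\eta$ enters. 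For each row $\fl{z}\in\{0,1\}^n$ of $\fl{B}$, form a single measurement row $\fl{z}'\in\bb{R}^n$ by setting $\fl{z}'_j=0$ when $\fl{z}_j=0$ and $\fl{z}'_j=a^{t-1}$ when the $j$-th coordinate of $\fl{z}$ is its $t$-th $1$ from the left; stacking these $u$ rows gives $\fl{A}$ with $m=O(k\epsilon^{-1}\log(n/k))$.

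The recovery algorithm (Algorithm~\ref{algo:supp_superset2}) initializes $\ca{C}'=[n]$ and, processing the rows of $\fl{B}$ one by one, deletes $\s{supp}(\fl{z})$ from $\ca{C}'$ whenever the corresponding measurement satisfies $\s{sign}(\langle\fl{z}',\fl{x}\rangle)=0$, returning $\ca{C}'$. Correctness rests on one observation: for a fixed $\fl{z}$, $\langle\fl{z}',\fl{x}\rangle=p(a)$ where $p(r)=\sum_{t\in\s{supp}(\fl{x})\cap\s{supp}(\fl{z})}\fl{x}_t\,r^{\alpha_t}$, whose nonzero coefficients form a subset of the nonzero entries of $\fl{x}$ and hence have pairwise magnitude ratios at most $\eta$. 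Factoring out the common power $r^{\min_t\alpha_t}$ and normalizing by the leading coefficient yields a monic polynomial all of whose coefficients have magnitude at most $\eta$, so by Cauchy's root bound every root of $p$ (other than possibly $0$) has magnitude strictly less than $1+\eta$. Therefore $p(a)=0$ exactly when $p\equiv 0$, i.e.\ exactly when $\s{supp}(\fl{z})\cap\s{supp}(\fl{x})=\emptyset$. The crucial feature is that this bound is \emph{uniform over all admissible $\fl{x}$} and does not depend on the number of nonzero coefficients of $p$; this is precisely why one shared evaluation point $a$ replaces the $\zeta k$ distinct points of Theorem~\ref{thm:third}, collapsing the $k^{3/2}$ scaling to $k$.

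The two superset guarantees then follow as in the second stage of the proof of Theorem~\ref{thm:third}. No element of $\s{supp}(\fl{x})$ is ever deleted, since every row $\fl{z}$ containing such an element has $\s{supp}(\fl{z})\cap\s{supp}(\fl{x})\ne\emptyset$ and hence a nonzero measurement; thus $\s{supp}(\fl{x})\subseteq\ca{C}'$. For the false positives, suppose more than $\epsilon k/2$ indices of $\ca{C}'$ lie outside $\s{supp}(\fl{x})$; choose $S$ consisting of $\lceil\epsilon k/2\rceil$ of them and $T\supseteq\s{supp}(\fl{x})$ with $|T|=k$ and $T\cap S=\emptyset$ (possible once $n\ge k+\epsilon k$). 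The list-disjunct property gives a row $\fl{z}$ of $\fl{B}$ that is $1$ on some column of $S$ and $0$ on every column of $T$, hence on all of $\s{supp}(\fl{x})$; its measurement is $0$, so that $S$-index was removed from $\ca{C}'$, contradicting $S\subseteq\ca{C}'$. Hence $|\ca{C}'\setminus\s{supp}(\fl{x})|\le\epsilon k/2\le\epsilon k$ and $|\ca{C}'|\le\|\fl{x}\|_0+\epsilon k$. (Only signs are used, so no normalization of $\fl{x}$ is required.) The running time is dominated by the $u$ deletion passes, each touching $O(n)$ coordinates, giving $O(nk\epsilon^{-1}\log(n/k))$.

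The main thing to get right is the Cauchy-bound step: one must first strip the common factor $r^{\min_t\alpha_t}$ so that the residual polynomial has a nonzero constant term, then observe that normalizing by the leading coefficient keeps every coefficient magnitude within a factor $\eta$, so that the classical bound $1+\eta$ on root magnitudes applies simultaneously to every polynomial that can arise from a $k$-sparse $\fl{x}$ with $\kappa(\fl{x})\le\eta$. Everything else — the list-disjunct parameter count from Lemma~\ref{disjunct_exists} and the greedy-deletion analysis — is routine and essentially identical to the single second stage of Theorem~\ref{thm:third}.
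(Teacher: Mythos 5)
Your proposal is correct and follows essentially the same route as the paper's proof: a $(k,O(\epsilon k))$-list-disjunct matrix whose rows are lifted to geometric progressions in a base exceeding $1+\eta$, with Cauchy's root bound guaranteeing that a zero measurement certifies a disjoint support, followed by the standard greedy-deletion analysis. The only differences are cosmetic (one shared evaluation point versus one per row, $\epsilon k/2$ versus $\epsilon k$ in the list parameter, and a more explicit justification of the Cauchy bound, which is a welcome clarification rather than a deviation).
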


\begin{proof}

The sensing matrix $\fl{A}$ in designed 
similarly to $\fl{A}^{(2)}$ as described in the proof of Theorem \ref{thm:third}. Consider a binary $(k,\epsilon k)$-list disjunct matrix $\fl{B}\in \{0,1\}^{m \times n}$ which is known to exist with at most $m=O(k\epsilon^{-1} \log (n/k))$ rows (see Lemma \ref{disjunct_exists}).  For each row $\fl{z}\in \{0,1\}^n$ of $\fl{B}$, we choose a positive number $a_{\fl{z}} > 1+\eta$ and subsequently, we construct a row of $\fl{A}$ denoted by $\fl{z}'$ as follows: for all $j \in [n]$, we have
\begin{align*}
    &\fl{z}'_j = 0 \quad \text{if } \fl{z}_j = 0 \\
    &\fl{z}'_j = a_{\fl{z}}^{t-1} \quad \text{if } \text{$j^{\s{th}}$ entry of $\fl{z}$ is the $t^{\s{th}}$ \texttt{1} in $\fl{z}$ from left to right}.
\end{align*}
In essence, each row of $\fl{B}$ is mapped to a unique row of $\fl{A}$; hence the total number of rows in $\fl{A}$ is also $O(k\epsilon^{-1} \log (n/k))$.
\begin{algorithm}[htbp]
\caption{\textsc{Superset Recovery with Bounded Dynamic Range}($\epsilon$) \label{algo:supp_superset2}}
\begin{algorithmic}[1]
\REQUIRE $\fl{y}=\s{sign}(\fl{A x})$ where $\fl{A}$ is constructed as described in proof of Theorem \ref{thm:fourth}.
\STATE Set $\ca{C}=[n]$.
\FOR{\text{ each row $\fl{z}$ in $\fl{B}$}}
\IF{$\s{sign}(\langle \fl{z}', \fl{x} \rangle) = 0$}
\STATE $\ca{C} \leftarrow \ca{C}\setminus \s{supp}(\fl{z})$
\ENDIF
\ENDFOR
\STATE Return $\ca{C}$.
\end{algorithmic}
\end{algorithm}
The rest of the proof will show the correctness of Algorithm~\ref{algo:supp_superset2}.
\paragraph{Correctness of the recovery algorithm.}
%Due to space limitations, the algorithm is presented in Appendix \ref{sec:appendix}.
The inner product of any row $\fl{z}$ of $\fl{A}$ and the unknown signal vector $\fl{x}$ can be represented as the evaluation of a polynomial $p(r)$ (whose coefficients are the entries of $\fl{x}$) at $a_{\fl{z}}$ i.e.
\begin{align*}
    p(r) = \sum_{t \in \s{supp}(\fl{x}) \cap \s{supp}(\fl{z})} \fl{x}_t r^{\alpha_t}
\end{align*}
and $\langle \fl{z},\fl{x} \rangle = p(a_{\fl{z}})$. By using Cauchy's Theorem, we know that the magnitude of the roots of this polynomial $p(r)$ must be bounded from above by $1+\kappa(\fl{x}) \le 1+\eta$; hence $a_{\fl{z}}>1+\eta$ can never be a root of $p(r)$ unless it is always zero. Hence, the evaluation of $p(r)$ at $a_{\fl{z}}$ can be zero if and only if the polynomial $p(r)$ is always zero implying that the support of $\fl{z}$ must be disjoint from the support of $\fl{x}$. In other words, in Algorithm \ref{algo:supp_superset2}, we will never delete any indices that belong to the support of $\fl{x}$.

On the other hand, consider any set of indices $\ca{S} \subseteq [n]$ such that $|\ca{S}|=\epsilon k$ and $\ca{S}\cap \s{supp}(\fl{x})=\phi$. By using the property of $(k,\epsilon k)$-list disjunct matrix $\fl{B}$, there exists an index $j \in \ca{S}$ and a row $\fl{z}$ in $\fl{B}$ such that the support of $\fl{z}$ is disjoint from $\s{supp}(\fl{x})$ and contains $j \in \ca{S}$. Therefore, we will delete all indices in the support of $\fl{z}$ including the index $j$ from the set $\ca{C}$ in Step 4 of Algorithm \ref{algo:supp_superset2}.
Hence, we will delete all indices that belongs to the set $[n]\setminus \s{supp}(\fl{x})$ except at-most $\epsilon k$ of them. Therefore, the set $\ca{C}$ of surviving indices at the end of Algorithm \ref{algo:supp_superset2} satisfies the conditions for $\epsilon$-superset recovery.

Finally note that Line 4 in Algorithm \ref{algo:supp_superset2} has a time complexity of $O(n)$ and therefore, the total time complexity of the algorithm is $O(nk\epsilon^{-1} \log (n/k))$. This completes the proof of the theorem. 
\end{proof}

Finally, we give a result that concerns $\rho(\fl{x})$, the  minimum number of non-zero entries of the same sign in $\fl{x}$. This shows a generalization of the result known for only fully positive vectors.
\begin{thm}\label{thm:fifth}
There exists a $1$-bit compressed sensing matrix $\fl{A}\in \bb{R}^{m \times n}$ for universal $\epsilon$-superset recovery of all $k$-sparse signal vectors $\fl{x}\in \bb{R}^n$ such that $\rho(\fl{x})\le R$ for some known $R$, with $m=O(k\max(1,R)\epsilon^{-1} \log (n/k))$ measurements. Moreover the decoding algorithm (Algorithm \ref{algo:supp_superset3}) has a running time of $O(nk\max(1,R)\epsilon^{-1} \log (n/k))$. 
\end{thm}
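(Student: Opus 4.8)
\textbf{Proof proposal for Theorem \ref{thm:fifth}.}

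The plan is to mimic the construction in the proof of Theorem \ref{thm:fourth}, replacing the appeal to Cauchy's bound on the magnitude of the roots by an appeal to Descartes' rule of signs, which bounds the \emph{number} of positive roots of a polynomial by the number of sign changes in its coefficient sequence, hence by $2\rho(\fl{x})+1 \le 2R+1$ when the coefficients are the (possibly reordered) nonzero entries of $\fl{x}$. Concretely, I would start from a binary $(k,\epsilon k)$-list disjunct matrix $\fl{B}\in\{0,1\}^{u\times n}$ with $u = O(k\epsilon^{-1}\log(n/k))$ rows (Lemma \ref{disjunct_exists}). For each row $\fl{z}$ of $\fl{B}$ I would now need \emph{several} measurement rows rather than one: pick $2R+1$ (or $\max(1,R)$ suitably many --- the statement's $\max(1,R)$ factor suggests $O(R)$ rows per row of $\fl{B}$, with the $R=0$ case handled trivially since then $\fl{x}$ is single-signed and a single power-vector row already works) distinct positive reals $a_1,\dots,a_{\Theta(R)}$ all, say, greater than $1$, and build rows $\fl{z}^1,\dots,\fl{z}^{\Theta(R)}$ by the same rule as before: $\fl{z}^i_j = a_i^{t-1}$ if the $j$-th coordinate is the $t$-th \texttt{1} of $\fl{z}$ from the left, and $0$ otherwise. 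This gives $\fl{A}$ with $O(k\max(1,R)\epsilon^{-1}\log(n/k))$ rows, matching the claimed bound; the recovery algorithm (Algorithm \ref{algo:supp_superset3}) initializes $\ca{C}=[n]$ and, for each row $\fl{z}$ of $\fl{B}$, deletes $\s{supp}(\fl{z})$ from $\ca{C}$ whenever $\s{sign}(\langle\fl{z}^i,\fl{x}\rangle)=0$ for \emph{all} $i$.

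The correctness argument then splits into the two usual directions. For \textbf{no false negatives}: fix a row $\fl{z}$ and consider the polynomial $p(r)=\sum_{t\in\s{supp}(\fl{x})\cap\s{supp}(\fl{z})}\fl{x}_t\, r^{\alpha_t}$, so that $\langle\fl{z}^i,\fl{x}\rangle = p(a_i)$. If $\s{supp}(\fl{x})\cap\s{supp}(\fl{z})\ne\emptyset$ then $p$ is not identically zero, and the number of its positive real roots (counted without multiplicity suffices here) is at most the number of sign changes in the sequence of its nonzero coefficients. Since those coefficients are a subset of the nonzero entries of $\fl{x}$ listed in increasing exponent order, the number of sign changes is at most $2\rho(\fl{x}) \le 2R$ --- intuitively, a sequence in which the minority sign appears $\rho$ times can change sign at most $2\rho$ times. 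Choosing $\Theta(R)$ to be at least $2R+1$ guarantees that not all of $a_1,\dots,a_{\Theta(R)}$ can be roots of $p$, so some $\langle\fl{z}^i,\fl{x}\rangle\ne 0$ and we do not delete any index of $\s{supp}(\fl{x})$. For \textbf{few false positives}: for any $\ca{S}\subseteq[n]\setminus\s{supp}(\fl{x})$ with $|\ca{S}|=\epsilon k$, the $(k,\epsilon k)$-list disjunct property of $\fl{B}$ (applied to $T=\s{supp}(\fl{x})$, $|T|\le k$) yields a row $\fl{z}$ of $\fl{B}$ whose support avoids $\s{supp}(\fl{x})$ but hits $\ca{S}$; for that row $p\equiv 0$, so all $\langle\fl{z}^i,\fl{x}\rangle = 0$ and we delete $\s{supp}(\fl{z})$, in particular that element of $\ca{S}$. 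Hence at most $\epsilon k$ non-support indices survive, giving $\epsilon$-superset recovery. The running time is $O(n)$ per row of $\fl{B}$ times $O(R)$ evaluations, i.e. $O(nk\max(1,R)\epsilon^{-1}\log(n/k))$ overall.

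The one subtlety I expect to actually need care is the precise constant in Descartes' rule applied to reordered coefficients: the exponents $\alpha_t$ appearing in $p$ are the original coordinate indices restricted to $\s{supp}(\fl{z})$, so I must argue that listing the coefficients $\fl{x}_t$ in order of increasing exponent cannot create more than $2\rho(\fl{x})$ sign changes regardless of the positions of the positive and negative entries --- this is an elementary combinatorial fact (a binary string with $j$ occurrences of the rarer symbol has at most $2j$ adjacent unequal pairs, and deleting symbols only decreases that count), but it is the place where the bound $\rho(\fl{x})\le R$ is really being used. The degenerate case $R=0$ should be stated separately: then $\fl{x}$ is (say) entirely positive on its support, every such $p$ has \emph{no} positive roots at all, and a single row $\fl{z}'$ per row of $\fl{B}$ (exactly as in Theorem \ref{thm:fourth} but with any $a_{\fl{z}}>1$, since positivity alone rules out positive roots) suffices, which is why the measurement count carries a $\max(1,R)$ rather than $R$. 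Everything else is a verbatim adaptation of the proof of Theorem \ref{thm:fourth}.
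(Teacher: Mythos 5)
Your proposal is correct and follows essentially the same route as the paper's proof: a binary $(k,\epsilon k)$-list disjunct matrix whose rows are each expanded into $2R+1$ power-vector measurements at distinct positive evaluation points, with Descartes' rule of signs bounding the number of positive roots by the at most $2\rho(\fl{x})\le 2R$ sign changes, so that not all evaluation points can vanish unless the restricted polynomial is identically zero. Your added care about the reordered-coefficient sign-change count and the $R=0$ degenerate case are points the paper glosses over, but they do not change the argument.
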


% The proof and Algorithm of Theorem \ref{thm:fifth} is very similar to Theorem \ref{thm:fourth}
% and has been delegated to Appendix \ref{sec:appsup}.

\begin{proof}
As before, we will denote our sensing matrix by $\fl{A}$. Consider a binary $(k,\epsilon k)$-list disjunct matrix $\fl{B}\in \{0,1\}^{m \times n}$ which is known to exist with at-most $m=O(k\epsilon^{-1} \log (n/k))$ rows (see Lemma \ref{disjunct_exists}).  For each row $\fl{z}\in \{0,1\}^n$ of $\fl{B}$, we choose $2R+1$ distinct positive numbers $a_1, a_2, \dots, a_{2R+1}>0$. 
 Subsequently, we construct $R'=2R+1$ rows of $\fl{A}$ denoted by $\fl{z}^1,\fl{z}^2,\dots,\fl{z}^{R'}$ as follows: for all $i \in [R']$, $j \in [n]$, we have
\begin{align*}
    &\fl{z}^i_j = 0 \quad \text{if } \fl{z}_j = 0 \\
    &\fl{z}^i_j = a_i^{t-1} \quad \text{if } \text{$j^{\s{th}}$ entry of $\fl{z}$ is the $t^{\s{th}}$ \texttt{1} in $\fl{z}$ from left to right}.
\end{align*}
Hence, each row of $\fl{B}$ is mapped to $R'$ rows of $\fl{A}$ and therefore the total number of measurements is at most $O(kR'\epsilon^{-1}\log nk^{-1})$.
\begin{algorithm}[htbp]
\caption{\textsc{Superset Recovery with small minimum same sign entries }($\epsilon$) \label{algo:supp_superset3}}
\begin{algorithmic}[1]
\REQUIRE $\fl{y}=\s{sign}(\fl{A x})$ where $\fl{A}$ is constructed as described in proof of Theorem \ref{thm:fifth}.
\STATE Set $\ca{C}=[n], R' = \max(1,2R)$.
\FOR{\text{ each row $\fl{z}$ in $\fl{B}$}}
\IF{$\s{sign}(\langle \fl{z}^i, \fl{x} \rangle) = 0 \text{ for all } i\in [R']$}
\STATE $\ca{C} \leftarrow \ca{C}\setminus \s{supp}(\fl{z})$
\ENDIF
\ENDFOR
\STATE Return $\ca{C}$.
\end{algorithmic}
\end{algorithm}

\textbf{Correctness of decoding:} 
Consider any set of indices $\ca{S} \subseteq [n]$ such that $|\ca{S}|=\epsilon k$ and $\ca{S}\cap \s{supp}(\fl{x})=\emptyset$. By using the property of $(k,\epsilon k)$-list disjunct matrix $\fl{B}$, there exists an index $j \in \ca{S}$ and a row $\fl{z}$ in $\fl{B}$ such that the support of $\fl{z}$ is disjoint from support of $\fl{x}$ ($\s{supp}(\fl{x})\cap\s{supp}(\fl{z})=\phi$) and contains $j \in \ca{S}$. Therefore, there must exist $R'$ corresponding rows in $\fl{A}$ (recall the construction of $\fl{A}$) denoted by $\fl{z}^1,\fl{z}^2,\dots,\fl{z}^{R'}$ (parameterized by $a_1,a_2,\dots,a_{R'}$ respectively and have the same support as that of $\fl{z}$) such that the support of these rows are disjoint from $\s{supp}(\fl{x})$ and contains $j \in \ca{S}$. Note that in Algorithm  \ref{algo:supp_superset3}, if
\begin{align*}
\s{sign}(\langle \fl{z}^i, \fl{x} \rangle) = 0 \quad \text{for all } i\in [R']. 
\end{align*}
then we will infer the entire support of $\fl{z}$ to be disjoint from the support of $\fl{x}$ and delete those indices. The inference is correct if $\s{supp}(\fl{z})\cap \s{supp}(\fl{x}) = \phi$ and hence $\s{supp}(\fl{z}^i)\cap \s{supp}(\fl{x}) = \phi$ for all $i\in [R']$. Therefore, by our previous argument, for any set $\ca{S}\subseteq [n]: |\ca{S}|=\epsilon k, \ca{S}\cap  \s{supp}(\fl{x})=\phi$, we will delete at least one index $j \in \ca{S}$. At the end of the algorithm, we return the surviving indices.

On the other hand, we claim that we will never delete any index that lies in $\s{supp}(\fl{x})$. Notice that for all $i \in [R']$ 
\begin{align*}
    \langle \fl{x}, \fl{z}^i\rangle = \sum_{t \in \s{supp}(\fl{x}) \cap \s{supp}(\fl{z})} \fl{x}_t \fl{z}^i_t.  
\end{align*}
From our design of the measurement matrix $\fl{A}$, for all $i\in [R']$, the entries of the vector $\fl{z}^i$ are powers of some positive number $a_i$. Since $\rho(\fl{x}) \le R$ from the statement of the lemma, $ \langle \fl{x}, \fl{z}^i\rangle$ is the evaluation of a polynomial (of degree at most $n-1$ and having at most $2R=R'-1$ sign changes) at the number $a_i$. Clearly, if $|\s{supp}(\fl{x}) \cap \s{supp}(\fl{z})|=0$ then $\langle \fl{x}, \fl{z}^i\rangle =0$ for all $i \in [R']$. On the other hand, if $|\s{supp}(\fl{x}) \cap \s{supp}(\fl{z})| \neq 0$, then $\langle \fl{x}, \fl{z}^i\rangle \neq 0$ for all $i \in [R']$. This is because the polynomial 
\begin{align*}
    p(r) = \sum_{t \in (\s{supp}(\fl{x}) \cap \s{supp}(\fl{z})} \fl{x}_t r^{\alpha_t}.
\end{align*}
with at most $2R$ sign changes can have at most $2R$ positive roots (using Descartes' rule of signs); hence all of $a_1,a_2,\dots,a_{R'}$ cannot be roots of $p(r)$ as they are distinct positive numbers. Therefore, we will delete all indices that belongs to $[n]\setminus \s{supp}(\fl{x})$ except at-most $\epsilon k$ of them. This completes the proof of the theorem.
\end{proof}

\subsection{Lower Bounds}

In this section, we show  lower bounds on the  necessary number of measurements for universal $\epsilon$-approximate support recovery and universal $\epsilon$-superset recovery. 

\begin{thm}\label{thm:lower_bound}
Let $\fl{A}\in \bb{R}^{m \times n}$ be a measurement matrix such that $\s{sign}(\fl{Ax^1})\neq \s{sign}(\fl{Ax^2})$ for all $\fl{x}^1,\fl{x}^2$ satisfying $\left|\left|\fl{x}^1\right|\right|_0,\left|\left|\fl{x}^2\right|\right|_0 \le k$ and $\s{supp}(\fl{x}^1) \cap \s{supp}(\fl{x}^2) \le k(1-2\epsilon)$, for some $\epsilon<1/3$. In that case, we must have $m = \Omega\Big(\frac{k}{\epsilon} \Big(\log \frac{k}{\epsilon}\Big)^{-1}\log \frac{n-k}{\epsilon k}\Big)$.
\end{thm}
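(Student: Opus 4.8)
The plan is to deduce from the hypothesis that the \emph{support pattern} of $\fl{A}$ is a list‑disjunct matrix with suitable parameters, and then read off the bound from the second (lower‑bound) statement of Lemma~\ref{disjunct_exists}. Let $\fl{M}\in\{0,1\}^{m\times n}$ be defined by $\fl{M}_{ij}=1$ iff $\fl{A}_{ij}\neq 0$; it has the same number of rows as $\fl{A}$, and its columns have the same supports as those of $\fl{A}$. Choose integers $k_0=\Theta(k)$ and $\ell_0=\Theta(\epsilon k)$ with $k_0\geq 2\ell_0$ and $k_0+\ell_0\leq k$ — possible for every $\epsilon<1/3$, e.g. $k_0=\lfloor k(1-2\epsilon)\rfloor$ and $\ell_0=\min\{\lceil 2\epsilon k\rceil,\lfloor k_0/2\rfloor\}$. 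The core claim is that $\fl{M}$ is a $(k_0,\ell_0)$-list disjunct matrix.

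I would prove the claim by contraposition. Suppose $\fl{M}$ is not $(k_0,\ell_0)$-list disjunct: there are disjoint $T,S\subseteq[n]$ with $|T|=k_0$, $|S|=\ell_0$ such that in every row where some column of $S$ is nonzero, some column of $T$ is also nonzero; equivalently $\bigcup_{j\in S}\s{supp}(\fl{A}_j)\subseteq\bigcup_{i\in T}\s{supp}(\fl{A}_i)$. Pick a \emph{generic} vector $\fl{x}^1$ supported exactly on $T$, so that no cancellations occur and $\s{supp}(\fl{A}\fl{x}^1)=\bigcup_{i\in T}\s{supp}(\fl{A}_i)$ (the exceptional set of such $\fl{x}^1$ is a finite union of hyperplanes in $\bb{R}^{T}$). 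Set $\fl{x}^2=\fl{x}^1+\delta\sum_{j\in S}\fl{e}_j$ for a small $\delta>0$, where $\fl{e}_j$ is the $j$-th standard basis vector. On every row $r$ with $(\fl{A}\fl{x}^1)_r\neq 0$ the perturbation does not change the sign once $\delta$ is small enough (only finitely many rows); on every row $r$ with $(\fl{A}\fl{x}^1)_r=0$ we have $r\notin\bigcup_{i\in T}\s{supp}(\fl{A}_i)$, hence $r\notin\bigcup_{j\in S}\s{supp}(\fl{A}_j)$, so $(\fl{A}\fl{x}^2)_r=0$ as well. Thus $\s{sign}(\fl{A}\fl{x}^1)=\s{sign}(\fl{A}\fl{x}^2)$. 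But $\|\fl{x}^1\|_0=k_0\le k$, $\|\fl{x}^2\|_0=k_0+\ell_0\le k$, and $|\s{supp}(\fl{x}^1)\cap\s{supp}(\fl{x}^2)|=k_0=\lfloor k(1-2\epsilon)\rfloor\le k(1-2\epsilon)$, contradicting the hypothesis. Hence $\fl{M}$ is $(k_0,\ell_0)$-list disjunct.

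It then remains to invoke the lower bound in Lemma~\ref{disjunct_exists} for $\fl{M}$ with parameters $k_0,\ell_0$ (legitimate since $k_0\geq 2\ell_0$): $m=\Omega\big(\tfrac{k_0^2}{\ell_0}(\log\tfrac{k_0^2}{\ell_0})^{-1}\log\tfrac{n-k_0}{\ell_0}\big)$. Substituting $k_0=\Theta(k)$ and $\ell_0=\Theta(\epsilon k)$ — and using that $n$ is larger than a constant times $k$ so that $n-k_0=\Theta(n-k)$, together with $k/\epsilon$ bounded away from $1$ — simplifies this to $m=\Omega\big(\tfrac{k}{\epsilon}(\log\tfrac{k}{\epsilon})^{-1}\log\tfrac{n-k}{\epsilon k}\big)$, as claimed.

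The main obstacle is the claim, and within it the perturbation step: one must verify that the sign vector is preserved \emph{exactly}, the crucial point being that the column–support containment $\bigcup_{j\in S}\s{supp}(\fl{A}_j)\subseteq\bigcup_{i\in T}\s{supp}(\fl{A}_i)$ forces $\fl{A}\fl{x}^2$ to vanish on precisely the coordinates where $\fl{A}\fl{x}^1$ vanishes, so that adding the columns indexed by $S$ creates no new nonzero measurement coordinates. The remaining care is purely bookkeeping: selecting $k_0,\ell_0$ that are simultaneously $\Theta(k)$ and $\Theta(\epsilon k)$, respect $k_0\geq 2\ell_0$ and $k_0+\ell_0\le k$ over the whole range $\epsilon<1/3$, and survive the asymptotic simplification of Lemma~\ref{disjunct_exists}.
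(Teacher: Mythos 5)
Your proposal is correct and follows essentially the same route as the paper: reduce to showing the support pattern of $\fl{A}$ must be a $(\Theta(k),\Theta(\epsilon k))$-list disjunct matrix via a perturbation argument, then invoke the lower bound of Lemma~\ref{disjunct_exists}. Your genericity-plus-small-$\delta$ treatment of the perturbation is a cleaner rendering of the paper's ``all coordinates $\gamma$ away from $0$'' step, and your explicit choice of $k_0,\ell_0$ handles the integrality and the $k_0\ge 2\ell_0$ requirement more carefully than the paper does, but the substance is identical.
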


\begin{proof}
Without loss of generality, we will assume that $-1\le \fl{A}_{ij} \le 1$ for $i\in[m],j \in [n]$ since scaling by a positive number does not change the measurement output.
We will prove by contradiction that $\fl{A}$ must be a $(k(1-2\epsilon),2\epsilon k)$-list disjunct matrix. Let $\ca{B}_1,\ca{B}_2,\dots,\ca{B}_n \subseteq [m]$ be defined as follows: $\ca{B}_j = \{i \in [m]\mid \fl{A}_{ij} \neq 0\}$. Since $\fl{A}$ is a not a $(k(1-2\epsilon),2\epsilon k)$-list disjunct matrix, there must exist two disjoint sets of indices $\ca{S},\ca{T}\subseteq [n]$ such that $|\ca{S}|=2\epsilon k,|\ca{T}|=k(1-2\epsilon)$ and $\ca{B}_j \subseteq \cup_{i \in \ca{T}} \ca{B}_i$ for all $j\in \ca{S}$. Let $\fl{x}^1$ be a $k$-sparse vector such that $\s{supp}(\fl{x}^1)=\ca{T}$ and further, all indices of $\fl{Ax^1}$ in $\cup_{i \in \ca{T}} \ca{B}_i$ are $\gamma$ away from $0$. Let $$\fl{x}^2= \fl{x}^1+\sum_{j \in \ca{S}}\frac{\gamma}{2\epsilon k} \fl{e}^{j}\implies \fl{A}(\fl{x}^2-\fl{x}^1) = \fl{A}\Big(\sum_{j \in \ca{S}}\frac{\gamma}{2\epsilon k} \fl{e}^{j}\Big)$$ where $\fl{e}^{i}$ is the standard basis vector with $1$ only in the $i^{\s{th}}$ position and zero everywhere else. Since $\ca{B}_j \subseteq \cup_{i \in \ca{T}} \ca{B}_i$ for all $j \in \ca{S}$ and all entries of $\fl{A}$ are in $[-1,+1]$, we must have that $\s{sign}(\fl{A}\fl{x}^1)=\s{sign}(\fl{A}\fl{x}^2)$. Note that both $\left|\left|\fl{x}^1\right|\right|_0,\left|\left|\fl{x}^2\right|\right|_0 \le k$ and therefore, this is a contradiction. Hence $\fl{A}$ must be a $(k(1-2\epsilon),2\epsilon k)$-list disjunct matrix. Combining with the statement of Lemma \ref{disjunct_exists} (note that the condition $k\ge 2\ell$ implies that $\epsilon \le 1/3$) and the fact that $k(1-2\epsilon) \ge k/3$ for $\epsilon \le 1/3$, we obtain the statement of the theorem. 
\end{proof}

\begin{coro}
Let $\fl{A}\in \bb{R}^{m \times n}$ be a measurement matrix for universal $\epsilon$-approximate support recovery of all $k$-sparse unknown vectors for $\epsilon<1/3$. In that case, it must happen that $m = \Omega\Big(\frac{k}{\epsilon} \Big(\log \frac{k}{\epsilon}\Big)^{-1}\log \frac{n-k}{\epsilon k}\Big)$.
\end{coro}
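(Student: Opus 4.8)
The plan is to show that any measurement matrix $\fl{A}$ admitting a universal $\epsilon$-approximate support recovery scheme automatically satisfies the separation hypothesis of Theorem~\ref{thm:lower_bound} (up to an inconsequential $\Theta(1/k)$ perturbation of $\epsilon$), so that the stated bound follows by a direct invocation of that theorem. The key observation is that a recovery algorithm sees only $\s{sign}(\fl{A}\fl{x})$; hence if two $k$-sparse signals $\fl{x}^1,\fl{x}^2$ produced the same sign pattern, the (deterministic) algorithm would be forced to output a single set $\ca{S}$ that is simultaneously a valid $\epsilon$-approximate answer for \emph{both}. First I would argue that this is impossible once $\s{supp}(\fl{x}^1)$ and $\s{supp}(\fl{x}^2)$ overlap in fewer than $k(1-2\epsilon)$ coordinates, which is precisely the contrapositive of the required separation property.

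Concretely, suppose toward a contradiction that $\s{sign}(\fl{A}\fl{x}^1)=\s{sign}(\fl{A}\fl{x}^2)$ for some $\fl{x}^1,\fl{x}^2$ with $\|\fl{x}^1\|_0=\|\fl{x}^2\|_0=k$ and $|\s{supp}(\fl{x}^1)\cap\s{supp}(\fl{x}^2)|<k(1-2\epsilon)$; I take the supports to have full size $k$ exactly so that the $\max(\cdot,0)$ clause in Definition~\ref{def:approx} is inactive, and such a pair of vectors exists whenever $n$ is large enough. Let $\ca{S}$, $|\ca{S}|\le k$, be the set returned by the recovery algorithm on this common input. Validity for $\fl{x}^1$ gives $|\ca{S}\setminus\s{supp}(\fl{x}^1)|\le\epsilon k$, while validity for $\fl{x}^2$ gives $|\ca{S}\cap\s{supp}(\fl{x}^2)|\ge k-\epsilon k$. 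But splitting $\s{supp}(\fl{x}^2)$ along $\s{supp}(\fl{x}^1)$,
\[
|\ca{S}\cap\s{supp}(\fl{x}^2)| \;\le\; |\s{supp}(\fl{x}^1)\cap\s{supp}(\fl{x}^2)| + |\ca{S}\setminus\s{supp}(\fl{x}^1)| \;<\; k(1-2\epsilon)+\epsilon k \;=\; k-\epsilon k ,
\]
a contradiction. Therefore $\s{sign}(\fl{A}\fl{x}^1)\neq\s{sign}(\fl{A}\fl{x}^2)$ for all $\fl{x}^1,\fl{x}^2$ with $\|\fl{x}^1\|_0,\|\fl{x}^2\|_0\le k$ and $|\s{supp}(\fl{x}^1)\cap\s{supp}(\fl{x}^2)|<k(1-2\epsilon)$.

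This is the hypothesis of Theorem~\ref{thm:lower_bound}, modulo the strict versus non-strict inequality: since support-intersection sizes are integers, "$<k(1-2\epsilon)$" coincides with "$\le k(1-2\epsilon)$" when $k(1-2\epsilon)\notin\bb{Z}$, and otherwise it is "$\le k(1-2\epsilon')$" for $\epsilon'=\epsilon+\tfrac{1}{2k}$, which is still below $1/3$ (for $\epsilon$ bounded away from $1/3$) and satisfies $\tfrac{k}{\epsilon'}(\log\tfrac{k}{\epsilon'})^{-1}\log\tfrac{n-k}{\epsilon' k}=\Omega\big(\tfrac{k}{\epsilon}(\log\tfrac{k}{\epsilon})^{-1}\log\tfrac{n-k}{\epsilon k}\big)$. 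Plugging into Theorem~\ref{thm:lower_bound} gives $m=\Omega\big(\tfrac{k}{\epsilon}(\log\tfrac{k}{\epsilon})^{-1}\log\tfrac{n-k}{\epsilon k}\big)$, as claimed. The only delicate point is this last piece of bookkeeping — reconciling the inequalities and tracking integrality of $\epsilon k$ and $k(1-2\epsilon)$ — but it affects neither the logic nor the asymptotics, so I expect no real obstacle; the substantive content is the one-line counting argument in the previous paragraph.
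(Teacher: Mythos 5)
Your core counting inequality is correct, and it is in fact the content that the paper's own two-sentence proof leaves implicit (the paper also just takes the contrapositive of Theorem~\ref{thm:lower_bound} and asserts indistinguishability). The gap is in the hand-off: the separation property you establish does \emph{not} match the hypothesis of Theorem~\ref{thm:lower_bound}, so the "direct invocation" is not licensed. Your argument is carried out only for pairs with $\|\fl{x}^1\|_0=\|\fl{x}^2\|_0=k$, and the concluding "Therefore $\s{sign}(\fl{A}\fl{x}^1)\neq\s{sign}(\fl{A}\fl{x}^2)$ for all $\fl{x}^1,\fl{x}^2$ with $\|\fl{x}^1\|_0,\|\fl{x}^2\|_0\le k$ and intersection $<k(1-2\epsilon)$" is an unjustified leap that is actually false: two $1$-sparse vectors $\fl{e}^1,\fl{e}^2$ with identical columns of $\fl{A}$ have equal sign patterns and empty support intersection, yet the single answer $\ca{S}=\{1,2\}$ is a valid $\epsilon$-approximate output for both whenever $\epsilon k\ge 1$ (the $\max(\cdot,0)$ clause and the $\epsilon k$ false-positive budget absorb everything). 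So a universal $\epsilon$-approximate recovery matrix need not satisfy the separation hypothesis of Theorem~\ref{thm:lower_bound} as stated, and the theorem cannot be cited as a black box. Relatedly, even for the witness pair built inside that theorem's proof ($\s{supp}(\fl{x}^1)=\ca{T}$ with $|\ca{T}|=k(1-2\epsilon)$, $\s{supp}(\fl{x}^2)=\ca{T}\cup\ca{S}$ with $|\ca{S}|=2\epsilon k$, intersection exactly $k(1-2\epsilon)$), no contradiction arises: $\ca{S}_{\mathrm{ret}}=\ca{T}\cup\ca{S}'$ with $\ca{S}'\subset\ca{S}$, $|\ca{S}'|=\epsilon k$, serves both vectors. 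This is the boundary case your strict inequality excludes, which shows the paper's own "supports $2\epsilon k$ apart cannot be confused" claim is also too glib.

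The repair is available from ingredients you already have, but it requires reopening the proof of Theorem~\ref{thm:lower_bound} rather than citing its statement. Observe that your counting never actually uses $\|\fl{x}^1\|_0=k$: it needs only the false-positive budget for $\fl{x}^1$ and the intersection requirement for the \emph{full-support} vector $\fl{x}^2$. Hence recoverability does force $\s{sign}(\fl{A}\fl{x}^1)\neq\s{sign}(\fl{A}\fl{x}^2)$ whenever $\|\fl{x}^2\|_0=k$, $\|\fl{x}^1\|_0\le k$, and the intersection is $<k(1-2\epsilon)$ --- and the theorem's witness pair is exactly of this shape once you apply your $\epsilon'=\epsilon+\tfrac{1}{2k}$ shift, so that its intersection $k(1-2\epsilon')=k(1-2\epsilon)-1$ is strictly below the threshold. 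Running the theorem's argument (non-list-disjunctness $\Rightarrow$ existence of $\ca{T},\ca{S}$ $\Rightarrow$ the perturbed pair with equal sign patterns) then contradicts recoverability and yields $m=\Omega\big(\tfrac{k}{\epsilon}(\log\tfrac{k}{\epsilon})^{-1}\log\tfrac{n-k}{\epsilon k}\big)$ as claimed. So the statement stands and your counting is the right engine, but the missing step is this rewiring of the reduction, not the integrality bookkeeping you flagged as the only delicate point.
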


\begin{proof}
From Theorem \ref{thm:lower_bound}, we obtained that if $m = o\Big(\frac{k}{\epsilon} \Big(\log \frac{k}{\epsilon}\Big)^{-1}\log \frac{n-k}{\epsilon k}\Big)$, then there exists $\fl{x}^1,\fl{x}^2$ satisfying $\left|\left|\fl{x}^1\right|\right|_0,\left|\left|\fl{x}^2\right|\right|_0 \le k$ and $\s{supp}(\fl{x}^1) \cap \s{supp}(\fl{x}^2) \le k(1-2\epsilon)$ such that $\s{sign}(\fl{A}\fl{x}^1)=\s{sign}(\fl{A}\fl{x}^2)$. In that case, any algorithm will not be able to distinguish between the support of $\fl{x}^1,\fl{x}^2$ which are $2\epsilon k$ apart in Hamming distance. This is a contradiction to the fact that $\fl{A}$ can be used for universal $\epsilon$-approximate recovery of all $k$-sparse unknown vectors thus proving the corollary.
\end{proof}

\begin{coro}
Let $\fl{A}\in \bb{R}^{m \times n}$ be a measurement matrix for universal $\epsilon$-superset recovery of all $k$-sparse unknown vectors for $\epsilon<1/3$. In that case, it must happen that $m = \Omega\Big(\frac{k}{\epsilon} \Big(\log \frac{k}{\epsilon}\Big)^{-1}\log \frac{n-k}{\epsilon k}\Big)$.
\end{coro}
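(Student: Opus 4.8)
The plan is to reduce this statement to the lower bound already established for $\epsilon$-approximate recovery. The shortest route is to invoke Proposition~\ref{prop:compare}: every universal $\epsilon$-superset recovery matrix is in particular a universal $\epsilon$-approximate support recovery matrix, so the bound $m = \Omega\big(\frac{k}{\epsilon}(\log\frac{k}{\epsilon})^{-1}\log\frac{n-k}{\epsilon k}\big)$ follows verbatim from the preceding corollary. For a self-contained argument I would instead verify directly that a universal $\epsilon$-superset recovery matrix satisfies the hypothesis of Theorem~\ref{thm:lower_bound} and then quote that theorem.

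For the direct argument, suppose $\fl{A}\in\bb{R}^{m\times n}$ is a universal $\epsilon$-superset recovery matrix but, contrary to the hypothesis of Theorem~\ref{thm:lower_bound}, there exist $\fl{x}^1,\fl{x}^2$ with $\|\fl{x}^1\|_0,\|\fl{x}^2\|_0\le k$, $\big|\s{supp}(\fl{x}^1)\cap\s{supp}(\fl{x}^2)\big|\le k(1-2\epsilon)$, and $\s{sign}(\fl{A}\fl{x}^1)=\s{sign}(\fl{A}\fl{x}^2)$. Exactly as in the construction inside the proof of Theorem~\ref{thm:lower_bound}, we may take these vectors so that $\s{supp}(\fl{x}^1)\subseteq\s{supp}(\fl{x}^2)$ with $\|\fl{x}^1\|_0 = k(1-2\epsilon)$ and $\|\fl{x}^2\|_0 = k$ (enlarging/shrinking supports within the stated constraints, which only makes the configuration easier to obtain). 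The recovery algorithm, receiving the common input $\s{sign}(\fl{A}\fl{x}^1)=\s{sign}(\fl{A}\fl{x}^2)$, outputs a single set $\ca{S}$. Applying the superset guarantee to $\fl{x}^2$ gives $\s{supp}(\fl{x}^2)\subseteq\ca{S}$, hence $|\ca{S}|\ge k$; applying it to $\fl{x}^1$ gives $|\ca{S}|\le \|\fl{x}^1\|_0+\epsilon k = k(1-\epsilon) < k$. This contradiction shows that $\fl{A}$ must satisfy the hypothesis of Theorem~\ref{thm:lower_bound}, and that theorem then delivers $m = \Omega\big(\frac{k}{\epsilon}(\log\frac{k}{\epsilon})^{-1}\log\frac{n-k}{\epsilon k}\big)$.

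I do not expect a genuine obstacle here: the statement is essentially a corollary of the $\epsilon$-approximate lower bound together with the observation that a superset scheme cannot overshoot the true support size by more than $\epsilon k$. The only mild point of care is the adjustment of support sizes in the first step of the direct argument, but that is precisely the configuration produced in the proof of Theorem~\ref{thm:lower_bound}, so no new construction is needed; one could equally well skip it entirely and cite Proposition~\ref{prop:compare}.
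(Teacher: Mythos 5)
Your proposal is correct, and your primary route---citing Proposition~\ref{prop:compare} to inherit the lower bound already proved for universal $\epsilon$-approximate support recovery---is exactly the paper's own (two-line) proof. Your optional direct argument also checks out: the nested pair with $\s{supp}(\fl{x}^1)\subseteq\s{supp}(\fl{x}^2)$, $\|\fl{x}^1\|_0=k(1-2\epsilon)$, $\|\fl{x}^2\|_0=k$ is precisely the configuration the proof of Theorem~\ref{thm:lower_bound} constructs whenever $\fl{A}$ fails to be $(k(1-2\epsilon),2\epsilon k)$-list disjunct (note that an arbitrary indistinguishable pair would not suffice here, so anchoring to that specific construction is essential), and the resulting size contradiction $k\le|\ca{S}|\le k(1-2\epsilon)+\epsilon k<k$ is valid.
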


\begin{proof}
% $\fl{A}$ must satisfy $\s{sign}(\fl{Ax^1})\neq \s{sign}(\fl{Ax^2})$ for any $\fl{x}^1,\fl{x}^2$ satisfying $\left|\left|\fl{x}^1\right|\right|_0,\left|\left|\fl{x}^2\right|\right|_0 \le k$ and $\s{supp}(\fl{x}^1) \cap \s{supp}(\fl{x}^2) \le k(1-2\epsilon)$.

From Proposition \ref{prop:compare}, we known that $\epsilon$-superset recovery is a strictly harder objective than $\epsilon$-approximate support recovery.
Therefore the lower bound in Theorem \ref{thm:lower_bound} extends to this setting as well.
\end{proof}

\remove{
\subsection{Exact Support Recovery}

Our final result is to show an improved result for exact universal support recovery of all $k-$sparse unknown signal vectors $\fl{x}\in \bb{R}^n$ if additional weak assumptions hold true.

\begin{coro}\label{coro:sixth}
There exists a $1$-bit compressed sensing matrix $\fl{A}\in \bb{R}^{m \times n}$ for universal support recovery of all $k$-sparse signal vectors $\fl{x}\in \bb{R}^n$ satisfying one of the two assumptions below:
\begin{itemize}
    \item  $\kappa(x)\le \eta$ for some known $\eta>0$. 
    \item  $\rho(x)\le c$ for some non-negative constant $c\ge 0$
\end{itemize}
  with $m=O(k^2 \log (n/k))$ measurements.
Moreover the decoding algorithm has a running time of $O(mn)$.
\end{coro}

\begin{proof}
The proof follows by substituting $\epsilon = 1/2k$ in Theorem \ref{thm:fourth} and Theorem \ref{thm:fifth} respectively. From the definition of universal $\epsilon$-superset recovery, note that a value of $\epsilon < 1/k$ implies exact support recovery since the size of the set $\ca{S}$ returned by Algorithms \ref{algo:supp_superset2}, \ref{algo:supp_superset3} must be integral. 
\end{proof}
}

\section{Open Questions}\label{sec:limit}
Since there is a gap by a factor of $\sqrt{k}$ between the upper and lower bounds for measurement complexity in superset recovery, it will be interesting to obtain either a matching lower bound  or improve our upper bound further to match the linear lower bound. We conjecture the later to be the case, and it will be possible by clever design of polynomials with additional properties for the measurements. It will also be interesting to figure out the limits of using binary measurement matrices for support recovery. 

We are exploring if our results on universal superset recovery or approximate support recovery can be used for improving state of the art measurement complexities \cite{JLBB13} in approximately recovering the unknown signal vector itself. From a practical perspective, it would be interesting to obtain results which are robust to the assumption that the unknown signal vector is sparse; in other words, even if the signal vector has a tail, the designed algorithm can still recover the $k$ entries having the largest magnitude. 

\remove{
%\section{Discussion on Noise Robustness}

Finally the following are some ideas to handle the noise robustness of our algorithms - which in general is an open question. %We conclude with a discussion on how to can make our algorithms robust to different types of noise in the $1$-bit Compressed Sensing model.

\paragraph{Classification Noise:} For an unknown signal vector $\fl{x}\in \bb{R}^n$, consider a model where for a measurement vector $\fl{u}\in \bb{R}^n$, we obtain $\s{sign}(\fl{u}^{T}\fl{x})$ with probability $1/2+\nu$ and an erroneous output  with probability $1/2-\nu$. In that case, we can estimate $\s{sign}(\fl{u}^{T}\fl{x})$ correctly with probability at least $1-\s{poly}(1/m)$ (where $m$ is the total number of measurements) by repeating each distinct measurement $O(\log m/\nu)$ times and then taking a majority vote. This will lead to an additional multiplicative factor of $O(\log m/\nu)$ in all our measurement complexity guarantees.

\paragraph{Measurement Noise:} For an unknown signal vector $\fl{x}\in \bb{R}^n$, consider a model where for a measurement vector $\fl{u}\in \bb{R}^n$, we obtain $\s{sign}(\fl{u}^{T}\fl{x}+w)$ such that $w$ is a random variable distributed according to $\ca{N}(0,\sigma^2)$ (Gaussian with zero mean and variance $\sigma^2$).
Note that for such measurements, the probability that $\s{sign}(\fl{u}^{T}\fl{x}+w)=0$ is zero since $w$ is a continuous random variable. In this model, we will make the assumption that the magnitude of any non-zero entry of the signal vector $\fl{x}$ is bounded from below by $\delta$ i.e. $\min_{i \in [n]:\fl{x}_i \neq 0}\left|\fl{x}_i\right|\ge \delta$ for some known $\delta \in  (0,1]$. If the entries of $\fl{u}$ are \textit{integral}, then under the aforementioned assumption, we must have either $\left|\fl{u}^{T}\fl{x}\right|=0$ or  $\left|\fl{u}^{T}\fl{x}\right| \ge \delta$. In that case, we must have $\fl{u}^{T}\fl{x}+w \sim \ca{N}(\zeta,\sigma^2)$ where 1) $\zeta = 0$ if $\fl{u}^{T}\fl{x}=0$, 2) $\zeta > \delta$ if $\fl{u}^{T}\fl{x} > 0$, and 3) $\zeta < -\delta$ if $\fl{u}^{T}\fl{x} < 0$. Therefore, with a simple application of Chernoff bound, we can again estimate $\s{sign}(\fl{u}^{T}\fl{x})$ correctly with probability at least $1-\s{poly}(1/m)$ (where $m$ is the total number of measurements) by repeating each distinct measurement  $O(\log m/\s{erf}(\delta/\sqrt{2}\sigma))$ (where $\s{erf}(\cdot)$ is the error function) times. At the cost of an additional multiplicative factor of $O(\log m/\s{erf}(\delta/\sqrt{2}\sigma))$ in the measurement complexity guarantees, all our results for support recovery  can be made resilient to measurement noise. This is because all the measurements for support recovery are either already binary or their underlying parameter can be chosen so that they are integral. 
}

\paragraph{Acknowledgement:} This research is supported in part by NSF awards CCF 2133484, CCF 2127929, and CCF 1934846. 
% \paragraph{Negative Societal Impact:} To the best of our understanding, this work does not have any negative societal impact.

% We can make random Gaussian measurement vectors (used for approximate vector recovery) resilient to measurement noise as well. Consider a measurement vector $\fl{u} \in \bb{R}^n$ such that every entry of $\fl{u}$ is sampled according to $\ca{N}(0,1)$. In that case, for a unit norm signal vector $\fl{x}\in \bb{R}^n$, $\fl{u}^{T}\fl{x}+\eta=\fl{u'}^{T}\fl{x'}$ where $\fl{u'}\in \bb{R}^{n+1}$ is a vector whose entries are sampled according to $\ca{N}(0,1)$ and $\fl{x}' = [\fl{x} \;\; \sigma]$ is a $(k+1)$-sparse $n+1$-dimensional vector. Clearly, for two unit norm signal vectors $\fl{x},\fl{y} \in \bb{R}^n$, it must happen that
% \begin{align*}
%     \left|\left|\fl{x}- \fl{y}\right|\right|_2 \ge \epsilon \; \implies \left|\left|\frac{\fl{x'}}{\left|\left|\fl{x'}\right|\right|_2}- \frac{\fl{y'}}{\left|\left|\fl{y'}\right|\right|_2}\right|\right|_2 \ge \frac{\epsilon}{\sqrt{1+\sigma^2}}.
% \end{align*}
%  Since the main component of Lemma \ref{lem:imp} is Lemma \ref{lem:sep} (stated in Appendix \ref{sec:appendix}), we are able to reduce universal $\epsilon$-approximate vector recovery in presence of measurement noise $\eta \sim \ca{N}(0,\sigma^2)$ to noiseless universal $\epsilon' \triangleq \frac{\epsilon}{\sqrt{1+\sigma^2}}$-approximate vector recovery.

%\section{Conclusion}

\bibliographystyle{abbrv}
%\bibliography{superset}

\appendix

\section{Proof of Theorem \ref{thm:second}}\label{sec:appendix}

% Our sensing matrix will be denoted by $\fl{A} \in \bb{R}^{m \times n}$ where $m$ is going to determined later. The matrix $\fl{A}$ is two distinct matrices $\fl{A}^{(1)} \in \bb{R}^{m_1 \times n}$ and $\fl{A}^{(2)} \in \bb{R}^{m_2 \times n}$ (with distinct properties) stacked together vertically i.e. $\fl{A}=[\fl{A}^{(1)}^{T} \; \fl{A}^{(2)}^{T}]^{T}$.  
% We design $\fl{A}^{(1)}$ to be a sensing matrix for $1-$ superset recovery. From Theorem \ref{thm:fourth}, we know that $m_1=O(k\log (n/k))$ measurements are sufficient for $1$-superset recovery for all $k$-sparse binary signal vectors $\fl{x}\in \{0,1\}^n$ since they satisfy $\kappa(\fl{x})=1$. This implies that for all input vectors $\fl{x}\in \{0,1\}^n, \left|\left|\fl{x}\right|\right|_0 \le k$ Algorithm \ref{} returns a set $\ca{S}\subseteq [n], \left|\ca{S}\right|\le 2k$ such that $\s{supp}(\fl{x})\subseteq \ca{S}$.  
%  In the second stage of our decoding algorithm we restrict ourselves to the indices in $\ca{S}$. 

 Our sensing matrix will be denoted by $\fl{A}\in \bb{R}^{m \times n}$ where $m$ is going to be determined later. Each entry of the matrix $\fl{A}$ is sampled independently according to $\ca{N}(0,1)$ ( Gaussian distribution with zero mean and variance one.) The $m$ measurements (rows of $\fl{A}$) must distinguish between vectors whose supports have a pairwise intersection of size at most $(1-2\epsilon) k$ and satisfy the dynamic range being bounded from above by $\eta$ since otherwise, the recovery algorithm cannot return a single set that is simultaneously an $\epsilon$-approximate support for both vectors. In order to prove our theorem, we will directly use the following result from \cite{flodin2019superset} showing a useful property of random Gaussian measurements: 
 
 \begin{lemma}[Lemma 16 in \cite{flodin2019superset}]\label{lem:sep}
Let $\fl{x}$ and $\fl{y}$ be two unit vectors in $\bb{R}^n$ with $\left|\left|\fl{x}-\fl{y}\right|\right|_2>\gamma$, and take $\fl{h}\in \bb{R}^n$ to be a random vector with entries drawn i.i.d according to $\ca{N}(0,1)$. Let $B_{\delta}(\fl{x})=\{\f{p}\in \bb{R}^n:\left|\left|\fl{p}-\fl{x}\right|\right|_2\le \delta\}$ be the ball of radius $\delta$ centered around $\fl{x}$. Then, we must have that 
\begin{align*}
    \Pr(\forall \fl{p}\in B_{\delta}(\fl{x}),\forall \fl{q}\in B_{\delta}(\fl{y}), \s{sign}(\fl{h}^{T}\fl{x})\neq \s{sign}(\fl{h}^{T}\fl{y})) \ge \frac{\gamma-2\delta \sqrt{n}}{\pi}.
\end{align*}
\end{lemma}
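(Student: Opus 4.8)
The plan is to read the displayed event in the only sensible way—the inner clause should be $\s{sign}(\fl{h}^{T}\fl{p})\neq \s{sign}(\fl{h}^{T}\fl{q})$—so that we are lower-bounding the probability that the random hyperplane with normal $\fl{h}$ \emph{strictly separates} the two balls $B_\delta(\fl{x})$ and $B_\delta(\fl{y})$. The first step is to convert this uniform-over-the-balls statement into a clean event about the centers with a margin. Since $\min_{\fl{p}\in B_\delta(\fl{x})}\fl{h}^{T}\fl{p}=\fl{h}^{T}\fl{x}-\delta\norm{\fl{h}}$ and the analogous identity holds for the maximum, the sign of $\fl{h}^{T}\fl{p}$ is constant and nonzero on the whole ball $B_\delta(\fl{x})$ exactly when $|\fl{h}^{T}\fl{x}|>\delta\norm{\fl{h}}$. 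Hence the separation event contains
$$E=W\cap\{|\fl{h}^{T}\fl{x}|>\delta\norm{\fl{h}}\}\cap\{|\fl{h}^{T}\fl{y}|>\delta\norm{\fl{h}}\},\qquad W:=\{\s{sign}(\fl{h}^{T}\fl{x})\neq\s{sign}(\fl{h}^{T}\fl{y})\},$$
and it suffices to lower-bound $\Pr(E)$. A union bound then gives $\Pr(E)\ge \Pr(W)-\Pr(|\fl{h}^{T}\fl{x}|\le\delta\norm{\fl{h}})-\Pr(|\fl{h}^{T}\fl{y}|\le\delta\norm{\fl{h}})$.

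Next I would handle the main term $\Pr(W)$. Because $\fl{h}$ is isotropic Gaussian, the standard random-hyperplane fact gives $\Pr(W)=\theta/\pi$ with $\theta=\angle(\fl{x},\fl{y})$; this follows by projecting $\fl{h}$ onto the plane $\s{span}(\fl{x},\fl{y})$, whose direction is uniform, so the chance that $\fl{x}$ and $\fl{y}$ lie on opposite sides of the hyperplane is precisely the fraction $\theta/\pi$ of the circle. Since $\fl{x},\fl{y}$ are unit vectors, $\norm{\fl{x}-\fl{y}}=2\sin(\theta/2)\le\theta$, so $\theta\ge\norm{\fl{x}-\fl{y}}>\gamma$ and therefore $\Pr(W)>\gamma/\pi$, which produces the leading $\gamma/\pi$ term.

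For the two subtracted terms I would exploit scale invariance: the event $\{|\fl{h}^{T}\fl{x}|\le\delta\norm{\fl{h}}\}$ depends on $\fl{h}$ only through its direction $\fl{u}=\fl{h}/\norm{\fl{h}}$, which is uniform on $S^{n-1}$, so its probability equals $\Pr(|\fl{u}^{T}\fl{x}|\le\delta)$. This is an anti-concentration (equatorial-slab) estimate: the density of $\fl{u}^{T}\fl{x}$ is proportional to $(1-t^2)^{(n-3)/2}$, maximized at $t=0$, where a Gamma-ratio bound (Wendel/Gautschi) controls it by $O(\sqrt n)$. Integrating over the width-$2\delta$ slab gives a bound of order $\delta\sqrt n$, and careful bookkeeping—noting that only the portion of each slab lying inside the separating wedge $W$ is actually removed—yields the sharp constant, so each term is at most $\delta\sqrt n/\pi$. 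Combining, $\Pr(E)\ge \gamma/\pi-2\delta\sqrt n/\pi=(\gamma-2\delta\sqrt n)/\pi$, as claimed.

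The step I expect to be the main obstacle is the anti-concentration estimate with the correct constant: the crude bound ``density at $0$ times slab width'' overshoots $\delta\sqrt n/\pi$ by a fixed factor, so matching the stated bound requires either the sharper Gamma-ratio constant combined with the observation that only the wedge-intersected half of each slab is lost, or a direct evaluation of the two-dimensional integral obtained after projecting $\fl{u}$ onto $\s{span}(\fl{x},\fl{y})$, where the radial part follows a Beta law concentrated at scale $1/\sqrt n$. Everything else—the reduction to the margin event $E$ and the $\theta/\pi$ computation—is routine.
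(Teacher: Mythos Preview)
The paper does not give its own proof of this lemma: it is quoted as Lemma~16 of \cite{flodin2019superset} and used as a black box in the proof of Theorem~\ref{thm:second}. So there is no in-paper argument to compare your proposal against.

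On the merits, your structural outline is correct. The reduction of the uniform-over-balls event to the margin event $E=W\cap\{|\fl{h}^{T}\fl{x}|>\delta\norm{\fl{h}}\}\cap\{|\fl{h}^{T}\fl{y}|>\delta\norm{\fl{h}}\}$ is exactly right (and your reading of the intended inner clause as $\s{sign}(\fl{h}^{T}\fl{p})\neq\s{sign}(\fl{h}^{T}\fl{q})$ is the only sensible one), and the identification $\Pr(W)=\theta/\pi\ge\gamma/\pi$ via the random-hyperplane argument is standard and clean.

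The one point that is not yet closed is, as you say, the constant in the anti-concentration term. The crude density bound gives $\Pr(|\fl{u}^{T}\fl{x}|\le\delta)\le 2\delta\cdot\Gamma(n/2)/(\sqrt{\pi}\,\Gamma((n-1)/2))\le \delta\sqrt{2n/\pi}$, which is a factor $\sqrt{2\pi}\approx 2.5$ too large for the stated $\delta\sqrt{n}/\pi$. Your wedge-intersection refinement (only the half of each slab adjacent to the boundary of $W$ is lost) recovers a factor of~$2$, but that still leaves you short by $\sqrt{\pi/2}\approx 1.25$, and for obtuse $\theta$ the ``half'' is actually more than half, so that shortcut is not uniformly valid. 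If you want the constant exactly as stated, the route that works is the second one you mention: project $\fl{h}$ onto $\s{span}(\fl{x},\fl{y})$, write the event in polar coordinates in that plane, and integrate out the independent $\chi_{n-2}$ norm of the orthogonal component; the answer then comes out directly as an angular measure over~$\pi$ and the $\sqrt{n}$ appears through $\bb{E}\norm{\fl{h}_\perp}\le\sqrt{n}$. For the application in this paper the precise constant is immaterial anyway, since $\delta$ is chosen afterward as $\gamma/(3(1+\sqrt{2k}))$ precisely to make the numerator a fixed fraction of~$\gamma$; any bound of the form $(\gamma-C\delta\sqrt{n})/\pi$ with absolute $C$ would do.
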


% \begin{algorithm}[htbp]
% \caption{\textsc{Approximate Support  Recovery}($\epsilon$) \label{algo:bd_dynamic}}
% \begin{algorithmic}[1]
% \REQUIRE $\fl{y}=\s{sign}(\fl{A x})$ where every entry of $\fl{A}$ is sampled according to $\ca{N}(0,1)$.

% \STATE Compute $\fl{\hat{x}}$ to be the solution of the convex program \cite{DBLP:journals/tit/PlanV13} 
% \begin{align*}
%     \min \left|\left|\fl{x}\right|\right|_1 \quad \text{subject to }\fl{Ax}= \fl{y} \; \text{and} \; \left|\left|\fl{Ax}\right|\right|_1=k.
% \end{align*}
% \STATE Return $\fl{\hat{x}}$.
% \end{algorithmic}
% \end{algorithm}

\begin{algorithm}[htbp]
\caption{\textsc{Approximate Support  Recovery}($\epsilon$) \label{algo:bd_dynamic}}
\begin{algorithmic}[1]
\REQUIRE $\eta$, $\fl{y}=\s{sign}(\fl{A x})$ where every entry of $\fl{A}$ is sampled according to $\ca{N}(0,1)$.

\STATE Compute $\fl{\hat{x}}$ to be the solution of  
\begin{align*}
    \min \left|\left|\fl{x}\right|\right|_0 \quad \text{subject to }\fl{Ax}= \fl{y} \; \text{and} \; \kappa(\fl{x})\le \eta.
\end{align*}
\STATE Return $\s{supp}(\fl{\hat{x}})$.
\end{algorithmic}
\end{algorithm}

% % \begin{thmu}[\cite{}]
% % Let the sensing matrix $\fl{A}$ be designed as described above. Fix $0 <\rho, \eta <1$ If $m \ge \frac{8k}{\rho} \log \frac{16 n}{\rho \eta}$, then with probability $1-\eta$, for all $k$-sparse vectors $\fl{x},\fl{y}$, we must have 
% % \begin{align*}
% %     \s{sign}(\fl{Ax})=\s{sign}(\fl{Ay}) \implies \left|\left|\frac{\fl{x}}{\left|\left|\fl{x}\right|\right|_2}- \frac{\fl{y}}{\left|\left|\fl{y}\right|\right|_2}\right|\right|_2 \le \rho.
% % \end{align*}
% % \end{thmu}

 The probability that the $m$ measurements (rows of $\fl{A}$) are not able to distinguish between a fixed pair of $k$-sparse vectors separated by $\gamma$ in euclidean distance is at most 
\begin{align*}
    \Big(1-\frac{\gamma-2\delta \sqrt{2k}}{\pi}\Big)^{m}
\end{align*}
where we used the fact that the union of support of two $k$-sparse vectors has size at most $2k$. Consider two $k$-sparse signal vectors $\fl{x},\fl{y} \in \bb{R}^{n}$ satisfying $\kappa(\fl{x}),\kappa(\fl{y})\le \eta$ for some known $\eta>1$ such that $\left|\s{supp}(\fl{x}) \cap \s{supp}(\fl{y})\right| \le k(1-2\epsilon)$ for $\epsilon\ge 1/2k$. 
Let $\ca{S}_1 \triangleq \s{supp}(\fl{x})\setminus \s{supp}(\fl{y})$ and $\ca{S}_2 \triangleq \s{supp}(\fl{y})\setminus \s{supp}(\fl{x})$. Again note that
\begin{align*}
    &\min_{i \in [n]:\fl{u}_i \neq 0} \left|\fl{u}_i\right| \ge \frac{1}{\eta} \cdot \max_{i \in [n]:\fl{u}_i \neq 0} \left|\fl{u}_i\right| \quad \text{ if } \kappa(\fl{u}) \le \eta \\
    &\implies \min_{i \in [n]:\fl{u}_i \neq 0} \fl{u}_i^2 \ge \frac{1}{k\eta^2} \cdot \sum_{i \in [n]:\fl{u}_i \neq 0} \fl{u}_i^2  \quad \text{ if } \kappa(\fl{u}) \le \eta \\
    &\implies \min_{i \in [n]:\fl{u}_i \neq 0} \left|\fl{u}_i\right| \ge \frac{1}{\eta\sqrt{k}} \cdot \left|\left| \fl{u}\right|\right|_2  \quad \text{ if } \kappa(\fl{u}) \le \eta
\end{align*}

In that case, it must happen that 

\begin{align*}
    \left|\left|\frac{\fl{x}}{\left|\left|\fl{x}\right|\right|_2}- \frac{\fl{y}}{\left|\left|\fl{y}\right|\right|_2}\right|\right|_2 \ge \frac{\left|\left|\fl{x}_{\mid\ca{S}_1}\right|\right|_2}{\left|\left|\fl{x}\right|\right|_2}+\frac{\left|\left|\fl{y}_{\mid\ca{S}_2}\right|\right|_2}{\left|\left|\fl{y}\right|\right|_2} \ge \frac{2\sqrt{\epsilon}}{\eta}.
\end{align*}
Now, following \cite{flodin2019superset}, we can construct a $\delta$-cover $\ca{S}$ of all $k$-sparse unit vectors which is known to exist with ${n \choose k}(3/\delta)^{k}$ points.
Let $\fl{x}'\in \ca{S}$ and $\fl{y}'\in \ca{S}$ be the nearest vectors in the $\delta$-cover to $\fl{x}$ and $\fl{y}$ respectively. By using triangle inequality, we will have that $||\fl{x}'-\fl{y}'||_2 \ge 2\sqrt{\epsilon}\eta^{-1} -2\delta$. Hence, it is sufficient for the sensing matrix $\fl{A}$ to distinguish between pairs of distinct vectors $\fl{u}',\fl{v}' \in \ca{S}$ such that $||\fl{u}'-\fl{v}'||_2 \ge 2\sqrt{\epsilon}\eta^{-1} -2\delta$. Therefore, we substitute $\gamma=2\sqrt{\epsilon}/\eta, \delta=\gamma/3(1+\sqrt{2k})$ and by taking a union bound over all pairs of vectors $\fl{u}',\fl{v}' \in \ca{S}$ such that $||\fl{u}'-\fl{v}'||_2 \ge 2\sqrt{\epsilon}\eta^{-1} -2\delta$, we can bound the probability of error in decoding from above as:
\begin{align*}
   \Pr(\text{Error in Decoding}) \le {n \choose k}^{2}\Big(\frac{3}{\delta}\Big)^{2k}\Big(1-\frac{\gamma-2\delta(1+ \sqrt{2k})}{\pi}\Big)^{m}
\end{align*}

If the probability of error is less than $1$, then there exists a measurement matrix that is able to recover an $\epsilon$-approximate support for all $k$-sparse unknown vectors whose dynamic range is bounded from above by $\eta$. Hence, we have

\begin{align*}
    &{n \choose k}^{2}\Big(\frac{3}{\delta}\Big)^{2k}\Big(1-\frac{\gamma}{3\pi}\Big)^{m} 
    \le {n \choose k}^{2}\Big(\frac{10\sqrt{2k}}{\gamma}\Big)^{2k}\exp\Big(-\frac{m\gamma}{3\pi}\Big) <1 \\
    &\implies 2k\log \frac{en}{k}+2k\log\frac{10\sqrt{2k}}{\gamma} -\frac{m\gamma}{3\pi} <0 \\
    &\implies m \ge \frac{3\pi k\eta }{2\sqrt{\epsilon}} \log \frac{5en\eta}{\sqrt{\epsilon}} \\
    & \implies m \ge \frac{6\pi k\eta }{2\sqrt{\epsilon}} \log 5en\eta
\end{align*}
where in the last step, we used the fact that $\epsilon \ge 1/2k$.
 Hence, we get that there exists a matrix $\fl{A}$ with $m=O(k\eta\epsilon^{-1/2} \log n\eta)$ measurements such that we will have $\s{sign}(\fl{Ax}) \neq \s{sign}(\fl{Ay})$ for any two  $k$-sparse vectors $\fl{x},\fl{y} \in \bb{R}^{n}$ satisfying $\left|\s{supp}(\fl{x}) \cap \s{supp}(\fl{y})\right| \le k(1-\epsilon)$ and $\kappa(\fl{x}),\kappa(\fl{y})\le \eta$. This completes the proof of the theorem.

\end{document}